\documentclass[a4paper]{ifacconf}

\usepackage{amsmath, amssymb, amsfonts}
\usepackage{bbm}
\allowdisplaybreaks
\usepackage{graphicx}
\usepackage{natbib}
\usepackage{subcaption}
\usepackage{tikz}
\usetikzlibrary{arrows.meta}
\usepackage{xcolor}

\newcommand{\qedsymbol}{$\blacksquare$}
\definecolor{dgreen}{rgb}{0.,0.8,0.}
\newcommand{\green}[1]{{\color{black}#1}}

\newtheorem{theorem}{Theorem}
\newtheorem{lemma}[theorem]{Lemma}
\newtheorem{proposition}[theorem]{Proposition}
\newtheorem{corollary}[theorem]{Corollary}
\theoremstyle{definition}

\newtheorem{assumption}[theorem]{Assumption}
\theoremstyle{remark}
\newtheorem{remark}[theorem]{Remark}

\newenvironment{proof}[1][Proof]{%
  \par\noindent\textbf{#1. }\rmfamily
}{\hfill\qedsymbol\par}

\newcommand*{\dif}{\mathop{}\!\mathrm{d}}
\newcommand{\reals}{\mathbb{R}}
\newcommand{\naturals}{\mathbb{N}}
\newcommand{\D}{\mathbb{D}}
\newcommand{\diag}{\mathrm{diag}}

\newcommand{\sbs}[1]{_{\scriptscriptstyle \mathrm{#1}}}

\newcounter{example}
\renewcommand{\theexample}{\arabic{example}}

\newenvironment{example}[1][]{%
  \refstepcounter{example}%
  \par\noindent
  \textbf{Example~\theexample%
  \if\relax\detokenize{#1}\relax
    .%
  \else
    \, -- #1.%
  \fi}%
  \rmfamily
}{\hfill\qedsymbol\par}

\begin{document}
\begin{frontmatter}

\title{Pathological Regimes of Closed-Loop Recommendation Systems over Social Networks \thanksref{footnoteinfo}}

\thanks[footnoteinfo]{This work has been supported in part by ANR projects Feeding Bias (ANR-22-CE380017-01) and MIAI Cluster BEAR (ANR-23-IACL-0006).}

\author[First,Second]{Simone Mariano}
\author[First]{Paolo Frasca}

\address[First]{Univ.\ Grenoble Alpes, CNRS, Inria, Grenoble INP, GIPSA-lab, 38000 Grenoble, France (e-mail: simone.mariano@grenoble-inp.fr, paolo.frasca@gipsa-lab.fr).}
\address[Second]{Univ.\ Grenoble Alpes, CNRS, Sciences Po Grenoble-UGA, Pacte, 38000 Grenoble, France.}

\begin{abstract}
This paper addresses the problem of designing recommendation systems for social networks and e-commerce platforms from a control-theoretic perspective. We formulate recommendation design as an infinite-horizon state-feedback optimal control problem whose performance index rewards alignment/engagement while penalizing polarization, large deviations from an uncontrolled baseline, recommendation mismatch, control effort, and exposure disagreement across neighboring users.
We derive explicit spectral conditions under which the reduced quadratic stage cost is strictly positive-definite, and we show that the failure of these conditions makes the resulting recommendation design exhibit pathological behaviors, such as unstable free modes, non-attainment of the infimum, or failure of the stationary affine synthesis.
\end{abstract}

\begin{keyword}
Social networks and opinion dynamics; Control of networks; Applications of optimal control; Recommendation systems
\end{keyword}

\end{frontmatter}

\section{Introduction}
\label{sec:intro}

Recommendation systems increasingly mediate how users encounter information, products, and social content in online spaces, such as social media and e-commerce platforms. While recommendations are often optimized to improve short-term engagement, the resulting feedback loop between recommendations and user behavior may amplify confirmation bias and contribute to undesirable collective phenomena, including polarization, echo chambers, and extreme opinions \citep{gausen2022abm,huszar2022algorithmic,bail2018opposing}. Moreover, combined with insufficient exploration, maximizing short-term engagement may lead to preferences to drift toward degenerate or extreme contents~\citep{jiang2019degenerate}, and repeatedly training on self-influenced logs may reduce diversity and long-run utility \citep{chaney2018confounding}. A basic mechanism behind these effects is that engagement-oriented recommendations can push users toward more extreme or polarized positions, while extreme positions may themselves generate higher engagement and monopolize users' attention. This mechanism is captured in closed-loop form in the single-user model by~\cite{FrascaRecommendation2022}, which provides a useful baseline for studying recommendation-induced opinion dynamics. 

The perspective developed by~\cite{dean2024accounting}  highlights the limitations discussed above and argues that recommendation systems should be designed with explicit models of how users and algorithms shape one another. This point is particularly relevant because several common design choices, including memoryless architectures~\citep{covington2016youtube}, simplified models of user dynamics, omitted creator adaptation, and aggressive engagement-oriented optimization~\citep{chen2019topk,immorlica2024clickbait}, can obscure long-term causal effects. Usually ad hoc, \emph{a posteriori} mitigation strategies for ill-behaved systems are often introduced only after deployment, using logged data generated by the same platform dynamics they are meant to correct, and may fail to prevent exposure shifts, reduced
content diversity, feedback-loop amplification, and polarization \citep{sinha2016deconvolving,chaney2018confounding,mansoury2020feedback,nguyen2014exploring}. 

In contrast, the recent survey paper
\cite{depasquale2026recommender} presents a control-theoretic perspective, in which recommender systems are interpreted as feedback systems and fairness-related objectives are analyzed as long-term dynamical properties. These observations suggest that recommendation design should not be treated only as a prediction or ranking problem, but also as a
dynamical feedback problem. The present paper studies one specific aspect of this broader agenda. We ask when an optimal-control formulation of recommendation design yields a well-posed closed-loop system, and when the same formulation may instead produce pathological opinion dynamics. The user side of the platform is modeled as a networked population holding opinions on multiple coupled topics. The recommendations are modeled as control inputs acting on a continuous-time opinion dynamics model inspired by~\cite{friedkin2016network,ye2020continuous}. The social influence among users is represented by a graph Laplacian, the logical dependence among topics is encoded by a topic-coupling matrix, and anchoring terms describe the tendency of users to retain their inner beliefs.                  
Building upon this modeling effort, 
the recommendation policy is synthesized through an infinite-horizon state-feedback optimal control problem. The corresponding performance index makes the platform's design trade-offs explicit, as it rewards alignment between recommendations and user opinions, used here as a proxy for engagement, while penalizing polarization, deviation from the uncontrolled equilibrium, recommendation effort, disagreement of exposure across neighboring users, recommendation mismatch, and the distance between the recommended stance and the currently expressed opinion. 

The main point of the paper is that these weights do more than tune a desired performance criterion. Indeed, they determine the mathematical regime of the resulting closed-loop design problem. When the reduced quadratic stage cost is strictly positive definite, the formulation falls within the classical infinite-horizon LQ setting and the usual well-posedness intuition applies. When this positivity is lost, however, the same closed-loop architecture may enter semidefinite or sign-indefinite regimes in which optimality of the prescribed cost no longer guarantees stability or even attainability of an optimal recommendation input.

We characterize these regimes in two steps. First, we analyze the homogeneous problem, obtained by removing the affine drift and linear cost terms. In this setting, we derive algebraic and spectral conditions that identify the positive-definite regime and then use free-endpoint indefinite LQ theory to describe what can happen outside it. This analysis distinguishes stabilizing behavior from unstable free modes and from cases in which the infimum of the cost is finite but not attained. Second, we reintroduce the affine terms induced by the uncontrolled opinion equilibrium and by the linear part of the cost. For this affine problem, we derive a stationary quadratic-affine Hamiltonian
identity and isolate the associated linear-algebraic compatibility condition. This condition determines whether the affine forcing can be absorbed by a constant feedback bias or whether the stationary affine synthesis fails.

A collection of significant examples shows that the resulting pathologies are not merely formal edge cases. Depending on the choice of weights, the optimal-control formulation may produce an unstable free mode, a finite but unattained optimum, semidefinite optimality without closed-loop stability, an affine translation of a homogeneous instability, or a failure of the affine compatibility condition. These examples make explicit how an apparently reasonable engagement-oriented objective can lead to a mathematically ill-posed or dynamically undesirable recommendation design when regularization, polarization penalties, and mismatch penalties are not sufficiently strong.

Compared with the conference version~\cite{mariano2026optimal}, this paper makes three main extensions. First, the performance index is extended by inclusion of the recommendation-mismatch term $J_{\mathrm C}$, yielding a more general reduced LQ formulation. Second, the analysis separates the homogeneous free-endpoint backbone from the affine layer generated by the baseline drift and the linear cost term. Third, the examples are expanded to provide a classification of pathological closed-loop recommendation designs, including both homogeneous indefinite-LQ failures and affine compatibility obstructions.

The remainder of the paper is organized as follows. Section~\ref{sec:pi} introduces the performance index and formulates the infinite-horizon optimal control problem. Section~\ref{sec:model} presents the networked multi-topic opinion dynamics and derives the reduced LQ formulation. Section~\ref{sec:sols_homo} studies the homogeneous problem and distinguishes positive-definite, semidefinite, and sign-indefinite regimes. Section~\ref{sec:sols_affine} treats the affine extension and derives the
stationary compatibility condition. Section~\ref{sec:problems} presents the possible pathological behaviors. Section~\ref{sec:disc} concludes with a discussion of the modeling choices, limitations, and possible extensions.

\noindent\textbf{Notation:} $\reals$ and $\naturals$ denote the sets of real and natural numbers, respectively.
For $n\in\naturals$, $\mathbbm{1}_n$ is the $n$-dimensional vector of ones, and
$[n]:=\{1,\dots,n\}$. $I_n$ and $O_n$, $n\in\naturals$, denote identity and zero matrices of appropriate dimensions. For any vector $x\in\reals^n$, $(x)=x^\top$. For $A,B \in \reals^{n \times m}$, $A \odot B$ denotes their Hadamard (elementwise) product, i.e., $(A \odot B)_{ij} := A_{ij} B_{ij}$ for all $i\in [n]$ and $j\in [m]$.
For symmetric $A,B\in\reals^{n \times n}$, the relations $A\succeq B$ and $A\preceq B$ denote the Loewner order, while $A \succ 0$ and $A \prec 0$ ($A \succeq 0$ and $A \preceq 0$) denote positive (semi)definite and negative (semi)definite matrices, respectively.
For $n\in\naturals$, $\mathbb{D}^n$ is the set of real diagonal matrices with $n$ diagonal entries, while
$\mathbb{D}_{\succ0}^n$ (resp.\ $\mathbb{D}_{\succeq0}^n$) is the set of diagonal positive
definite (positive semidefinite) matrices with $n$ diagonal entries.
For a diagonal matrix $W_{(\cdot)}=\mathrm{diag}(w_{(\cdot),1},\dots,w_{(\cdot),nm})$,
we write
$\lambda_{M,(\cdot)}:=\max_{i\in\{1,\dots,nm\}} w_{i,(\cdot)}$ and
$\lambda_{m,(\cdot)}:=\min_{i\in\{1,\dots,nm\}} w_{i,(\cdot)}$.
For a complex number $\lambda$, $\Re(\lambda)$ denotes its real part, and for a matrix $M$,
$\sigma(M)$ denotes its spectrum.

\section{Performance index and design goal}
\label{sec:pi}

The first objective of this work is to clearly define a performance index that quantifies the distortion and polarization-versus-engagement issue identified in \cite{FrascaRecommendation2022} in a networked, multi-topic setting. The guiding principle in designing the recommendation system is simple: engagement should be rewarded, but only to the extent that it does not generate pathological dynamics.

Consider a set of $n \in \naturals$ users holding opinions on $m \in \naturals$ topics, connected via a directed, weighted and connected graph $\mathcal{G}(\mathcal{E},\mathcal{V})$, where $\mathcal{E}$ and $\mathcal{V}$ are, respectively, the edge and vertex set, and with associated Laplacian matrix $L\in \reals^{n\times n}$, with $L=\Delta-\Gamma$, where $\Gamma\in\mathbb{R}^{n\times n}$ is the adjacency matrix and $\Delta= \diag(\Gamma \mathrm \mathbf{1}_n) $ is the degree matrix. For any time $t\ge 0$, let $x(t)\in \reals^{nm}$ be the vector of opinions and $u(t)\in \reals^{nm}$ be the vector of the inputs provided by the recommendation systems. The entries $x_{(k-1)n + i}$ and $u_{(k-1)n + i}$, $k\in\{1, \dots,m\}$, $i\in\{1, \dots,n\}$, of $x$ and $u$ correspond to the opinion of the $i$-th agent on the $k$-th topic and its corresponding input, respectively.

The aforementioned trade-offs will be precisely quantified through the integral cost
\begin{align}
J(x(t),u(t))
:=& \int_{0}^{\infty}\big(-J\sbs{EN}+J\sbs{P}+J\sbs{D}+J\sbs{EX}+J\sbs{F}+{J\sbs{C}}\big)\,\dif t \nonumber\\
=&\int_{0}^{\infty} \ell(x(t),u(t))\,\dif t,
\label{eq:J}
\end{align}
whose different terms correspond to the various objectives of the recommendation system.

The first term
\begin{equation*}
J\sbs{EN}:=x^{\top} W\sbs{EN} u,
\end{equation*}
with $W\sbs{EN} \in \mathbb{D}_{\succeq0}^{nm}$, models the engagement by rewarding the alignment between user opinion and recommendation.
Notice that by convention the index is minimized, and hence engagement appears with a negative sign.

The term
\begin{equation*}
J\sbs{P}:=x^\top W\sbs{P}x,
\end{equation*}
with $W\sbs{P} \in \mathbb{D}_{\succeq0}^{nm}$, penalizes the polarization as in network-aware designs and agent-based evaluations~\citep{chandrasekaran2024network,gausen2022abm}.

The term
\begin{equation*}
J\sbs{D}:=(x-x_{\mathrm{eq}})^\top W\sbs{D}(x-x_{\mathrm{eq}}),
\end{equation*}
with $W\sbs{D} \in \mathbb{D}_{\succeq0}^{nm}$, captures how much opinions deviate from the uncontrolled equilibrium $x_{\mathrm{eq}}\in \reals^{nm}$, which generalizes the inner belief idea in~\cite{FrascaRecommendation2022} and \cite{friedkin2016network} and aims to preserve the alignment of the agents' inner beliefs and their expressed opinion on a given topic.

The term
\begin{equation*}
J\sbs{EX}:=u^\top W\sbs{EX}u,
\end{equation*}
with $W\sbs{EX} \in \mathbb{D}_{\succ0}^{nm}$, captures the effort of the control and has the scope of limiting excessively strong or frequent inputs to avoid overexposure of the agents in the social platform \citep{qin2024too}.

Finally, the term
\begin{equation*}
J\sbs{F}:=\alpha\sbs{F}\,u^\top L_u u,
\qquad
L_u:=I_m\otimes L_b, \qquad \alpha\sbs{F}\geq 0,
\end{equation*}
with $L_b:=\Delta_b-\Gamma_b$,  $\Gamma_b:=\tfrac{\Gamma+\Gamma^\top}{2}$ and $\Delta_b:=\operatorname{diag}(\Gamma_b\mathbf{1})$, is used to mimic collaborative filtering by regularizing exposure across neighboring users and robustifies the design without imposing hard constraints. 
This design choice is justified under the widely-supported assumption that users who interact with one another tend to share similar preferences and opinions \citep{mcpherson2001birds}.

Contrarily to \cite{mariano2026optimal}, we also include the recommendation-mismatch term
\begin{equation*}
J\sbs{C}:=(u-x)^\top W\sbs{C}(u-x),
\end{equation*}
with $W\sbs{C}\in\mathbb{D}_{\succ0}^{nm}$, which penalizes excessive discrepancy between the recommended stance and the currently expressed opinion. This term discourages recommendations that are too far from the user's present viewpoint and therefore complements the overexposure and polarization penalties by explicitly regularizing recommendation mismatch.

\begin{figure}[tbp]
  \centering
  \raisebox{0.75ex}{\scalebox{1.20}{\begin{tikzpicture}[
      >=Latex, line width=0.8pt, line join=miter, line cap=butt,
      every node/.style={font=\scriptsize},
      block/.style={draw, thick, align=center,
                    minimum width=2.2cm, minimum height=0.9cm, inner sep=2pt}
  ]
    \node[block] (user) at (0,1.4)   {User dynamics\\ $f(x(t),u(t))$};
    \node[block] (rec)  at (0,0)     {Recommendation system\\$\pi(x(t),u(t))$};

    \draw[->] (rec.north) -- node[right,pos=0.55] {$u$} (user.south);

    \path (user.east) ++(0, 0.20) coordinate (yout);
    \path (user.east) ++(0,-0.20) coordinate (xfb);

    \draw[->] (yout) -- ++(0.9,0) node[above,pos=0.15] {$\qquad x \odot u$};

    \draw[->] (xfb) -- ++(0.9,0) node[below,pos=0.30] {$x$}
                |- (rec.east);
  \end{tikzpicture}}}
  \caption{Representation of the user--recommendation-system feedback loop.}
  \label{fig:sys}
\end{figure}

The design goal is to define a recommendation system that selects the appropriate $u$ so that the population follows a well-behaved trajectory and settles near a desirable steady regime, while minimizing~\eqref{eq:J}; see Figure~\ref{fig:sys}. This objective translates into the infinite-horizon optimal control problem
\begin{align}
   \min_{u}\quad &  J=\int_{0}^{\infty} \ell(x(t),u(t))\,\dif t, \nonumber\\
   \mathrm{s.t.}\quad & \dot x = f(x,u),\qquad x(0)=x_0.
\label{eq:prob_full}
\end{align}

\section{Opinion dynamics model}
\label{sec:model}

To define the agents' dynamics, we follow the models presented in \cite{friedkin2016network,ye2020continuous}. Let $X \in \reals^{n\times m}$ be the opinion matrix, where $X_{ik}$ denotes the opinion of agent $i\in\{1,\dots,n\}$ on topic $k\in\{1,\dots,m\}$. The continuous-time dynamics are
\begin{align}
    \dot{X}(t) = - L\,X(t) \;&-\; A_a\big(X(t)-X_\circ\big) \nonumber \\
    &+\; \big(U(t)-X(t)\big) \;+\; \big(X(t)C^\top - X(t)\big),
\label{eq:model_matrix_noninv}
\end{align}
with the single opinion evolving by
\begin{align*}
    \dot x_{ik}
:= & -\sum_{j=1}^{n}\big( L_{ij}\,x_{jk}(t)\big)
  - A_{a,ii}\big(x_{ik}(t)-X^\circ_{ik}\big) \\
  &+ (u_{ik}(t) - x_{ik}(t))
  + \Big(\sum_{h=1}^{m} x_{i h}(t)\,C_{kh} - x_{ik}(t)\Big).
\end{align*}
Matrix $L\in\reals^{n\times n}$ is the Laplacian of $\mathcal{G}(\mathcal{E},\mathcal{V})$, which drives consensus among neighboring agents (within each topic), $C\in\reals^{m\times m}$ captures inter-topic influence within each agent's opinions, while $A_a\in\D^{n\times n}_{\succ0}$ is a diagonal anchoring matrix with $X_\circ\in\reals^{n\times m}$ collecting anchoring opinions, namely the inner beliefs of the agents on a given topic. Finally, the matrix $U(t)\in\reals^{n\times m}$ is the input provided by the recommendation system and appears in the relative form $(U-X)$, so that aligned recommendations induce no artificial amplification.

The inter-topic matrix $C$ satisfies the following property, consistent with Assumption~1 in \cite{ye2020continuous}, which prevents instability of the uncontrolled system.

\begin{assumption}
\label{ass:C}
Matrix $C$ is such that $c_{ii} \geq 0$ for all $i \in [m]$, $|c_{ij}| \leq 1$ for all $i, j \in [m]$, and, given $A:=C-I_m$, zero is a semisimple eigenvalue of $A$ with multiplicity $p\geq 1$, while for every $\lambda\in\sigma(A)$ such that $\lambda \neq 0$ one has $\Re(\lambda)<0$.
\end{assumption}

Equation~\eqref{eq:model_matrix_noninv} extends \cite{friedkin2016network,ye2020continuous} by including a recommendation input $U$ that specifies a stance per user and topic. At the same time, it extends the single-user closed-loop model of \cite{FrascaRecommendation2022} to a networked, multi-topic, continuous-time setting.

We now provide a convenient vectorized form of \eqref{eq:model_matrix_noninv}.
Let $x:=\mathrm{vec}(X)\in\reals^{nm}$ and $u:=\mathrm{vec}(U)\in\reals^{nm}$. Using the identity $\mathrm{vec}(AXB)=(B^\top\otimes A)\mathrm{vec}(X)$, one obtains
\begin{align}
\dot{x}
:= &\Big[(C\otimes I_n) - (I_m\otimes(L+A_a)) - 2I_{nm}\Big]x \nonumber \\
&+\; (I_m\otimes A_a)\,\mathrm{vec}(X_\circ) \;+\; u
= A_c x + d + u
=: f(x,u),
\label{eq:aff_dyn}
\end{align}
with
\begin{equation}
A_c := (C\otimes I_n) - \big(I_m\otimes(L + A_a + 2I_n)\big),
\,
d := (I_m\otimes A_a)\,\mathrm{vec}(X_\circ).
\label{eq:model_vec_comp}
\end{equation}
The uncontrolled equilibrium solves $A_{uc}x_{\mathrm{eq}}+d=0$ with
\begin{equation*}
A_{uc} := (C\otimes I_n) - \big(I_m\otimes(L + A_a + I_n)\big),
\qquad
x_{\mathrm{eq}} := -A_{uc}^{-1}d,
\end{equation*}
where $A_{uc}$ is Hurwitz under Assumption~\ref{ass:C}; see Lemma~2 in \cite{ye2020continuous}. In particular, when $X_\circ$ is constant, the uncontrolled dynamics $\dot x = A_{uc}x + d$ is stable and converges to $x_{\mathrm{eq}}$.

The next result builds on Lemma~2 in \cite{ye2020continuous} and shows that $A_c$ is also Hurwitz.

\begin{proposition}
\label{prop:Ac_Hur}
Given Assumption~\ref{ass:C}, $A_c$ in \eqref{eq:model_vec_comp} is Hurwitz.
\end{proposition}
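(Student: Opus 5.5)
The plan is to observe that $A_c$ is a shift of $A_{uc}$ by a negative multiple of the identity. From \eqref{eq:model_vec_comp} and the definition of $A_{uc}$,
\begin{equation*}
A_c = (C\otimes I_n) - \big(I_m\otimes(L+A_a+I_n)\big) - (I_m\otimes I_n) = A_{uc} - I_{nm},
\end{equation*}
since $I_m\otimes I_n = I_{nm}$. This is a direct algebraic identity and needs no further assumption.

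Next I use the fact recalled just before the statement. Under Assumption~\ref{ass:C}, $A_{uc}$ is Hurwitz by Lemma~2 in \cite{ye2020continuous}. Since $A_c$ and $A_{uc}$ differ by a multiple of the identity, they share eigenvectors, and
\begin{equation*}
\sigma(A_c)=\{\lambda-1:\lambda\in\sigma(A_{uc})\}.
\end{equation*}
Hence $\Re(\mu)=\Re(\lambda)-1<-1<0$ for every $\mu\in\sigma(A_c)$, so $A_c$ is Hurwitz.

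The only step that carries real content is the Hurwitz property of $A_{uc}$, which I take from the cited lemma as the paper does. Should one want to avoid relying on it, I would argue directly on
\begin{equation*}
A_{uc}=(A\otimes I_n)-\big(I_m\otimes(L+A_a)\big),\qquad A=C-I_m.
\end{equation*}
The approach would be a Gershgorin or diagonal-dominance argument that combines the row structure of $L$, the strict positivity of $A_a$, and the bounds on $C$ in Assumption~\ref{ass:C}.

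I expect that alternative argument to be the main obstacle. Assumption~\ref{ass:C} constrains the spectrum of $A$, not the spectrum of the Kronecker sum, and $A\otimes I_n$ does not commute with $I_m\otimes(L+A_a)$ when $A_a$ is not a scalar multiple of the identity. For this reason I would rely on the cited lemma.
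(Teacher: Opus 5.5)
Your proof is correct but takes a different route from the paper's. The identity $A_c = A_{uc} - I_{nm}$ holds exactly, so $\sigma(A_c)$ is $\sigma(A_{uc})$ shifted by $-1$. The Hurwitz property of $A_{uc}$, which the paper itself asserts via Lemma~2 of \cite{ye2020continuous} just before the statement, then finishes the argument.

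The paper instead proves the claim directly. It writes $A_c = C\otimes I_n - I_m\otimes M$ with $M = L + A_a + 2I_n$ and uses Assumption~\ref{ass:C} to get $\Re(\lambda)\le 1$ on $\sigma(C)$. Gershgorin applied to $M$, using that $L$ has nonpositive off-diagonal entries with $\sum_{j\neq i}|L_{ij}|=L_{ii}$, gives $\Re(\mu)\ge 2+a_{\min}$. The Kronecker-sum identity $\sigma(C\otimes I_n - I_m\otimes M)=\{\lambda-\mu\}$ then yields the explicit margin $\Re(\nu)\le -(1+a_{\min})$.

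Your route is shorter and shows transparently that $A_c$ inherits stability from $A_{uc}$ with one extra unit of margin. However, it rests entirely on the external lemma, whereas the paper's argument is self-contained.

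The obstacle you anticipate for the direct argument does not exist. For any square $A$ and $B$, the mixed-product property gives $(A\otimes I_n)(I_m\otimes B)=A\otimes B=(I_m\otimes B)(A\otimes I_n)$. So $A\otimes I_n$ and $I_m\otimes(L+A_a)$ always commute, whether or not $A_a$ is a multiple of the identity; the paper even records this identity in the remark after Corollary~\ref{cor:PDP2}. Hence the spectrum of the Kronecker sum is exactly the set of eigenvalue differences, and Assumption~\ref{ass:C} combined with Gershgorin gives the direct proof in a few lines. In fact, the same argument with $L+A_a+I_n$ in place of $M$ gives $\Re(\nu)\le -a_{\min}$ on $\sigma(A_{uc})$. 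This proves the Hurwitz property of $A_{uc}$ that you imported, so you did not need to rely on the citation at all.
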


\begin{proof}
Let
\begin{equation*}
M:=L+A_a+2I_n,
\qquad
A_c=C\otimes I_n-I_m\otimes M.
\end{equation*}
By Assumption~\ref{ass:C}, if $\lambda\in\sigma(C)$, then $
\Re(\lambda)\le 1.
$
Indeed, $C=A+I_m$, where $A:=C-I_m$, and Assumption~\ref{ass:C} gives
$\Re(\mu)\le 0$ for every $\mu\in\sigma(A)$, with the only eigenvalues on the
imaginary axis equal to $0$.

Next, let $\mu\in\sigma(M)$. Since $L$ is a Laplacian, its off-diagonal entries
are nonpositive and
\begin{equation*}
\sum_{j\neq i}|L_{ij}|=L_{ii}.
\end{equation*}
Hence, by Gershgorin's theorem, every eigenvalue $\mu$ of $M$ satisfies
\begin{equation*}
\Re(\mu)\ge \min_{i\in[n]}\Big(M_{ii}-\sum_{j\neq i}|M_{ij}|\Big)
=\min_{i\in[n]}\big((A_a)_{ii}+2\big)
=:2+a_{\min},
\end{equation*}
where $a_{\min}:=\min_{i\in[n]}(A_a)_{ii}>0$.

Finally, by the standard spectral identity for Kronecker sums,
\begin{equation*}
\sigma(C\otimes I_n-I_m\otimes M)
=
\{\lambda-\mu:\ \lambda\in\sigma(C),\ \mu\in\sigma(M)\}.
\end{equation*}
Therefore, for every $\nu\in\sigma(A_c)$,
\begin{equation*}
\Re(\nu)=\Re(\lambda-\mu)\le 1-(2+a_{\min})=-(1+a_{\min})<0.
\end{equation*}
Thus $A_c$ is Hurwitz.
\end{proof}

To analyze the role of the different design terms and to separate the genuinely free-endpoint homogeneous backbone from the additional affine contributions induced by the baseline drift and linear terms, we first rewrite the recommendation problem in an equivalent reduced LQ form. This reformulation makes explicit the matrices that govern the quadratic and affine structure of the infinite-horizon problem and provides the natural starting point for the subsequent homogeneous and affine analyses. Expanding the stage cost \eqref{eq:J} and ignoring the constant term independent of the optimizer yields
\begin{equation}
\ell(x,u)=x^\top Qx+2x^\top Nu+u^\top Ru+2c^\top x,
\label{eq:stage_cost}
\end{equation}
with
\begin{align*}
Q&=W\sbs{D}+W\sbs{P}+W\sbs{C}\succeq 0,\\
N&=-\tfrac12\,W\sbs{EN}-W\sbs{C},\\
R&=W\sbs{EX}+\alpha\sbs{F}L_u+W\sbs{C}\succ 0,\\
c&=-W\sbs{D}x_{\mathrm{eq}}.
\end{align*}
Introducing the shifted input
\begin{equation*}
v:=u+R^{-1}N x,
\end{equation*}
the optimal control problem \eqref{eq:prob_full} is equivalent to
\begin{align}
\min_{v}\quad & \widetilde J=\int_{0}^{\infty}\widetilde\ell(x(t),v(t))\,\dif t, \nonumber\\
\mathrm{s.t.}\quad & \dot x=\widetilde A x+v+d,\qquad x(0)=x_0,
\label{eq:prob_full_alt}
\end{align}
where
\begin{equation}
\widetilde A:=A_c-R^{-1}N,
\qquad
\widetilde Q:=Q-NR^{-1}N,
\label{eq:subs}
\end{equation}
and
\begin{equation*}
\widetilde\ell(x,v)=x^\top \widetilde Qx+v^\top Rv+2c^\top x.
\end{equation*}

\section{Solutions of the LQ Optimization Problem: The Homogeneous Case}
\label{sec:sols_homo}

To establish the baseline geometric properties of the recommendation system, we first address the homogeneous free-endpoint problem where the uncontrolled opinion drift and linear cost terms are zero, that is, $d=0\cdot \mathbbm{1}_{nm}$ and $c=0\cdot \mathbbm{1}_{nm}$. The effective quadratic stage cost becomes
\begin{equation}
\widetilde\ell_{sq}(x,v):=x^\top \widetilde Qx+v^\top Rv.
\label{eq:stage_cost_homo}
\end{equation}
The sign structure of $\widetilde\ell_{sq}$ is a first key indicator of the problem regime. In particular, once strict positive definiteness is lost, the classical LQR guarantees no longer apply automatically, and the geometry of the dynamics becomes essential. We can establish simple spectral bounds for when this form is positive definite.

\begin{lemma}
\label{lem:PDP}
Consider $\widetilde\ell_{sq}$ in \eqref{eq:stage_cost_homo}. If
\begin{equation}
\lambda_{m,\mathrm{D}}+\lambda_{m,\mathrm{P}}+\lambda_{m,\mathrm{C}}
>
\frac{(\lambda_{M,\mathrm{EN}}+2\lambda_{M,\mathrm{C}})^2}{4(\lambda_{m,\mathrm{EX}}+\lambda_{m,\mathrm{C}})},
\label{eq:cond_L1}
\end{equation}
then the block matrix $\begin{bmatrix}\widetilde Q& O_{nm} \\ O_{nm} &R\end{bmatrix}$and thus $\widetilde\ell_{sq}(x,v)$ in \eqref{eq:stage_cost_homo} are positive definite.
\end{lemma}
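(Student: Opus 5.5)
The plan is to treat the two diagonal blocks separately. A block-diagonal symmetric matrix is positive definite if and only if each diagonal block is, and then $\widetilde\ell_{sq}(x,v)=[x^\top\ v^\top]\,\mathrm{blkdiag}(\widetilde Q,R)\,[x^\top\ v^\top]^\top$ is positive definite as a quadratic form. So it suffices to show $R\succ 0$ and $\widetilde Q=Q-NR^{-1}N\succ 0$, using only Loewner-order bounds together with the diagonal structure of the weights.

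\emph{Step 1: a lower bound on $R$.} I will first argue that $L_u\succeq 0$. The matrix $\Gamma_b=(\Gamma+\Gamma^\top)/2$ is symmetric with nonnegative entries, so $L_b=\Delta_b-\Gamma_b$ is the Laplacian of an undirected weighted graph. Hence
\[
z^\top L_b z=\tfrac12\sum_{i,j}(\Gamma_b)_{ij}(z_i-z_j)^2\ge 0 .
\]
Since $L_u=I_m\otimes L_b$, its spectrum coincides with that of $L_b$, so $L_u\succeq 0$. With $\alpha\sbs{F}\ge 0$ this gives
\[
R\succeq W\sbs{EX}+W\sbs{C}\succeq(\lambda_{m,\mathrm{EX}}+\lambda_{m,\mathrm{C}})I_{nm}\succ 0 .
\]
Consequently $R^{-1}\preceq(\lambda_{m,\mathrm{EX}}+\lambda_{m,\mathrm{C}})^{-1}I_{nm}$.

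\emph{Step 2: an upper bound on $NR^{-1}N$.} The matrix $N=-\tfrac12W\sbs{EN}-W\sbs{C}$ is diagonal, hence symmetric. By congruence with $N$,
\[
NR^{-1}N\preceq(\lambda_{m,\mathrm{EX}}+\lambda_{m,\mathrm{C}})^{-1}N^2 .
\]
The diagonal entries of $N^2$ are $(\tfrac12 w_{\mathrm{EN},i}+w_{\mathrm{C},i})^2$. Because all weights are nonnegative, each of these is at most $(\lambda_{M,\mathrm{EN}}+2\lambda_{M,\mathrm{C}})^2/4$. Therefore
\[
NR^{-1}N\preceq\frac{(\lambda_{M,\mathrm{EN}}+2\lambda_{M,\mathrm{C}})^2}{4(\lambda_{m,\mathrm{EX}}+\lambda_{m,\mathrm{C}})}\,I_{nm}.
\]

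\emph{Step 3: conclusion.} The matrix $Q=W\sbs{D}+W\sbs{P}+W\sbs{C}$ is diagonal, and each of its entries is at least $\lambda_{m,\mathrm{D}}+\lambda_{m,\mathrm{P}}+\lambda_{m,\mathrm{C}}$. Hence $Q\succeq(\lambda_{m,\mathrm{D}}+\lambda_{m,\mathrm{P}}+\lambda_{m,\mathrm{C}})I_{nm}$. Subtracting the bound from Step 2 and using \eqref{eq:cond_L1} yields $\widetilde Q\succeq\varepsilon I_{nm}$, where $\varepsilon>0$ is the gap in \eqref{eq:cond_L1}. Combined with Step 1, both diagonal blocks are positive definite, which proves the claim.

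The only step needing care is Step 1. The Laplacian $L$ itself may be non-symmetric and need not be positive semidefinite, so the argument must use the symmetrized $L_b$ and the nonnegativity of the weights in $\Gamma$. The remaining steps are routine diagonal and Loewner estimates.
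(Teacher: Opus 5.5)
Your proposal is correct and takes essentially the same route as the paper. Both arguments reduce the claim to $\widetilde Q\succ0$ and $R\succ0$ via the block-diagonal structure, use $L_u\succeq0$ to bound $R$ from below by $(\lambda_{m,\mathrm{EX}}+\lambda_{m,\mathrm{C}})I_{nm}$, and then bound $\widetilde Q$ from below by the scalar gap in \eqref{eq:cond_L1}; you simply spell out steps the paper compresses into one Loewner inequality, namely positive semidefiniteness of the symmetrized Laplacian, monotonicity of the inverse, and the congruence bound on $NR^{-1}N$.
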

\begin{proof}
Given $R \succ 0$,
\begin{equation}
\label{eq:proof_L1_1}
\begin{bmatrix}\widetilde Q&O_{nm}\\O_{nm}&R\end{bmatrix}\succ 0
\iff \widetilde Q\succ 0 \ \text{and}\ R \succ 0.
\end{equation}
Since $L_u\succeq 0$ and $W\sbs{C}\succ 0$,
\begin{align*}
\widetilde Q
&=
(W\sbs{D}+W\sbs{P}+W\sbs{C})
\\
&-
\Big(\tfrac12 W\sbs{EN}+W\sbs{C}\Big)
(\alpha\sbs{F}L_u+W\sbs{EX}+W\sbs{C})^{-1}
\Big(\tfrac12 W\sbs{EN}+W\sbs{C}\Big)
\\
&\succ
\left(
\lambda_{m,\mathrm{D}}+\lambda_{m,\mathrm{P}}+\lambda_{m,\mathrm{C}}
-
\frac{(\lambda_{M,\mathrm{EN}}+2\lambda_{M,\mathrm{C}})^2}{4(\lambda_{m,\mathrm{EX}}+\lambda_{m,\mathrm{C}})}
\right)I_{nm},
\end{align*}
and the last matrix is positive definite whenever \eqref{eq:cond_L1} holds.
\end{proof}

\begin{corollary}
\label{cor:PDP2}
Suppose $Q,R$ and $N$
 are simultaneously orthogonally diagonalizable, that is,
there exist an orthogonal $U\in \reals^{nm \times nm}$ and $q,r,s\in \reals^{nm}_{>0}$ such that
$Q=U\mathrm{diag}(q)U^\top$, $R=U\mathrm{diag}(r)U^\top$,
$N=U\mathrm{diag}(-s)U^\top$.
Then
\begin{equation*}
\begin{bmatrix}\widetilde Q & O_{nm}\\O_{nm} & R\end{bmatrix}\succ0
\iff
q_i \;>\; \frac{s_i^2}{\,r_i}\ \ \forall i\in [nm].
\end{equation*}
\end{corollary}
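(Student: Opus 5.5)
The plan is to reduce the block condition to a scalar condition per common eigendirection. Since $R\succ 0$ (as noted in the statement of the reduced problem), the equivalence \eqref{eq:proof_L1_1} from the proof of Lemma~\ref{lem:PDP} applies verbatim. It shows that the block matrix is positive definite if and only if $\widetilde Q\succ 0$. The statement's condition $r\in\reals^{nm}_{>0}$ is consistent with $R\succ0$, so $R^{-1}=U\mathrm{diag}(r)^{-1}U^\top$ is well defined. The whole task therefore reduces to computing $\widetilde Q=Q-NR^{-1}N$ in the common eigenbasis.

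Next I will substitute the simultaneous diagonalizations and use $U^\top U=I_{nm}$ to collapse the inner factors:
\begin{equation*}
NR^{-1}N = U\mathrm{diag}(-s)U^\top U\mathrm{diag}(r)^{-1}U^\top U\mathrm{diag}(-s)U^\top = U\mathrm{diag}\big(s_i^2/r_i\big)U^\top .
\end{equation*}
Hence
\begin{equation*}
\widetilde Q = U\mathrm{diag}\big(q_i - s_i^2/r_i\big)U^\top .
\end{equation*}
Here the sign convention on $N$ is irrelevant, because $N$ appears twice.

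Positive definiteness is invariant under orthogonal congruence, since $x^\top\widetilde Q x = y^\top \mathrm{diag}(\cdot)\,y$ with $y=U^\top x$ ranging over all nonzero vectors. So $\widetilde Q\succ0$ holds if and only if every diagonal entry is positive, that is, $q_i>s_i^2/r_i$ for all $i\in[nm]$. Chaining this with the first step gives both directions of the equivalence.

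There is no real obstacle. The only point needing care is that this equivalence is exact, unlike the sufficient bound of Lemma~\ref{lem:PDP}. The exactness comes from the shared eigenbasis, which turns the Schur complement into a diagonal computation. The necessity direction then follows from testing $\widetilde Q$ on the eigenvector $Ue_i$, which gives $e_i^\top U^\top\widetilde Q\,Ue_i = q_i-s_i^2/r_i$.
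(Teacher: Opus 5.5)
Your proposal is correct and matches the paper's proof: you use the block-diagonal equivalence \eqref{eq:proof_L1_1} to reduce to $\widetilde Q\succ0$, then write $\widetilde Q=Q-NR^{-1}N$ in the common orthonormal eigenbasis and read off $q_i>s_i^2/r_i$. You also state explicitly that necessity follows by testing on $Ue_i$, and that $R\succ0$ follows automatically from $r>0$; the paper leaves both points implicit.
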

\begin{proof}
By \eqref{eq:proof_L1_1},
\begin{equation*}
\begin{bmatrix}\widetilde Q & O_{nm}\\O_{nm} & R\end{bmatrix}\succ0
\iff
\widetilde Q\succ0 \ \text{and}\ R\succ0.
\end{equation*}
Since $Q$, $R$, and $N$ are simultaneously orthogonally diagonalizable,
\begin{equation*}
\begin{aligned}
\widetilde Q&=
Q-NR^{-1}N \\&=
U\left(\diag(q)-\diag(s)\diag(r)^{-1}\diag(s)\right)U^\top.
\end{aligned}
\end{equation*}
Hence $\widetilde Q\succ0$ is equivalent to $q_i>s_i^2/r_i$ for every $i\in[nm]$.
\end{proof}

\begin{remark}
If the weights are homogeneous for each topic, i.e., $Q = Q_t \otimes I_n, W_{\mathrm{EN}} = W_t \otimes I_n, R = R_t \otimes I_n + \alpha_{\mathrm{F}}\, I_m \otimes L_{\mathrm{sym}}$,
then $Q$, $W_{\mathrm{EN}}$, and $R$ pairwise commute since it holds that, given $A\in\reals^{m\times m}, \, B\in\reals^{n\times n}$, $(A\otimes I_n)(I_m\otimes B) = A\otimes B = (I_m\otimes B)(A\otimes I_n)$,
which in turn it implies that they are simultaneously orthogonally diagonalizable. Matrices $Q$, $R$, and $N$ are also simultaneously orthogonally diagonalizable when $J_F=0$
\end{remark}

\begin{remark}
\label{rem:semi}
Lemma~\ref{lem:PDP} and Corollary~\ref{cor:PDP2} can be restated with nonstrict inequalities in the bounds to obtain conditions for positive semidefiniteness of $\widetilde\ell_{sq}$. Indeed, $\diag(\widetilde Q,R)\succeq 0$ if, and only if $\widetilde Q\succeq 0$ and $R\succ 0$, with $R \succ 0$ being a standing assumption. Thus, replacing the strict bounds in Lemma~\ref{lem:PDP} and Corollary~\ref{cor:PDP2} by the corresponding nonstrict versions yields sufficient, and necessary and sufficient conditions, respectively, for $\widetilde\ell_{sq}$ to be nonnegative definite.
\end{remark}

Since the input matrix in the reduced dynamics \eqref{eq:prob_full_alt} is $B=I_{nm}$, the pair $(\widetilde A,B)$ is controllable. Thus, when the reduced quadratic stage cost $\widetilde\ell_{sq}$ in \eqref{eq:stage_cost_homo} is strictly positive definite, the homogeneous problem belongs to the classical continuous-time infinite-horizon LQ setting. In the present paper, however, we focus on what happens when this strict positivity is lost. In that case, the sign structure of $\widetilde\ell_{sq}$ is no longer sufficient, by itself, to determine the qualitative behavior of the optimal closed loop: the geometry of the dynamics and the directions left unpenalized by the cost become essential.

This loss of strict positivity leads to two different nonclassical situations. If $\widetilde Q\succeq0$ is singular, then some state directions are not directly penalized, and detectability of $(\widetilde Q^{1/2},\widetilde A)$ becomes the relevant condition ruling out unstable invisible modes. If, instead, $\widetilde Q$ is sign-indefinite, then the problem leaves the standard stabilizing LQR framework and must be treated through the free-endpoint indefinite LQ theory recalled below. In that regime, one must distinguish between finiteness of the infimum, attainability of an optimal input, and stability of the closed-loop matrix induced by the corresponding free-endpoint solution. Hence, Lemma~\ref{lem:PDP} and Corollary~\ref{cor:PDP2} should be read as design guardrails on the weights: when these guardrails are violated, optimality of the prescribed cost no longer automatically implies bounded or convergent opinion trajectories.

If instead $\widetilde Q$ has negative eigenvalues, then $\widetilde\ell_{sq}$ is sign-indefinite, and the classical convex stabilizing picture breaks down more radically. To treat the unconstrained infinite-horizon homogeneous problem in this regime, we exploit the geometric free-endpoint results of \cite{trentelman1989regular}. Throughout this subsection, admissible controls are understood in the free-endpoint sense of \cite{trentelman1989regular}. We examine the symmetric extremal solutions of the algebraic Riccati equation (ARE)
\begin{equation}
\label{eq:ARE-min}
\widetilde A^\top P + P\,\widetilde A - P\,R^{-1}P + \widetilde Q = 0.
\end{equation}
Let $P_-$ denote the minimal antistabilizing solution, and $P_+$ denote the maximal stabilizing solution. Define
\begin{equation*}
A_-:=\widetilde A - R^{-1}P_-,
\end{equation*}
and the subspace
\begin{equation}
\label{eq:trentelman_subspace}
\mathcal N:=(\ker P_- \mid A_-)\cap \mathcal{X}^+(A_-),
\end{equation}
where $(\ker P_- \mid A_-)$ is the largest $A_-$-invariant subspace contained in $\ker P_-$, and $\mathcal{X}^+(A_-)$ is the $A_-$-invariant subspace spanned by generalized eigenvectors with $\Re(\lambda)\ge0$. Under the regularity hypotheses that ensure existence of the extremal solutions $P_-$ and $P_+$ and applicability of the free-endpoint construction in \cite{trentelman1989regular}, the homogeneous free-endpoint value is generated by the distinguished symmetric solution $P_\star$ of \eqref{eq:ARE-min} supported by $\mathcal N$. Concretely, letting
\begin{equation*}
\green{
\Delta:=P_+-P_-,
}
\end{equation*}
\green{and denoting by $\Pi_{\mathcal N}$ the projector onto $\mathcal N$ along}
\begin{equation*}
\green{
\Delta^{-1}(\mathcal N^\perp)
:=
\{x\in\reals^{nm}:\ \Delta x\in\mathcal N^\perp\},
}
\end{equation*}
\green{the supported solution is the symmetric ARE solution characterized by}
\begin{equation}
\label{eq:trent}
\green{
P_\star
=
P_-\,\Pi_{\mathcal N}
+
P_+\,(I-\Pi_{\mathcal N}).
}
\end{equation}
\green{In particular, $P_\star$ is not introduced here as a standard stabilizing LQR solution, but as the specific free-endpoint solution singled out by the geometric construction in \cite{trentelman1989regular}.} The homogeneous free-endpoint value is then
\begin{equation*}
V_{\mathrm{h}}(x_0)=x_0^\top P_\star x_0.
\end{equation*}
Moreover, optimal controls exist for all initial conditions if and only if
\begin{equation*}
\ker (P_+-P_-) \subseteq \ker P_-,
\end{equation*}
and, whenever they exist, they are generated by the static feedback law
\begin{equation*}
v^\star(x)=-\,R^{-1}P_\star x.
\end{equation*}
If the above kernel inclusion fails, then the homogeneous problem may still have a finite infimum, but that infimum is not attained by any admissible input.

\section{Solutions of the LQ Optimization Problem: The Affine Case}
\label{sec:sols_affine}

We now return to the complete dynamics \eqref{eq:prob_full_alt}, reintroducing the intrinsic network drift ($d \neq 0$) and the linear cost objectives ($c \neq 0$). \green{Because of the affine terms $d$ and $c$, the raw integral $\int_0^\infty \widetilde\ell(x(t),v(t))\,\dif t$ need not be finite even when a stationary affine correction is meaningful. Keeping the free-endpoint philosophy of Section~\ref{sec:sols_homo}, we use the stationary Hamiltonian identity only as a direct algebraic verification device. Fixing a symmetric solution $P_\star$ of the homogeneous Riccati equation \eqref{eq:ARE-min}, we look for a constant $\rho\in\reals$ and a quadratic-affine function}
\begin{equation}
\label{eq:ans_aff}
\green{V(x)=x^\top P_\star x+2p^\top x}
\end{equation}
\green{such that}
\begin{equation}
\label{eq:HJB}
\min_v \left[ \widetilde{\ell}(x,v) + \nabla V(x)^\top (\widetilde{A}x + v + d) \right] = \rho
\end{equation}
\green{holds pointwise in $x$.}

\green{Equation \eqref{eq:HJB} is consistent with the stationary Hamilton--Jacobi--Bellman formalism used in standard dynamic-programming treatments. In the present paper, however, it is not invoked as the conclusion of a general viscosity-solution existence theorem on $\reals^{nm}$ as in \cite{bardi1997optimal}. Instead, it is used only as a pointwise stationary Hamiltonian identity to be verified within a quadratic-affine ansatz.}

\green{Fix a symmetric ARE solution $P_\star$ of \eqref{eq:ARE-min}. In the homogeneous indefinite free-endpoint problem, $P_\star$ can be chosen as in \cite{trentelman1989regular}, as the distinguished supported solution introduced in Section~\ref{sec:sols_homo}. The quadratic-affine ansatz \eqref{eq:ans_aff} is then a problem-specific specialization of the stationary Hamiltonian identity to the present affine-quadratic setting.}

\green{The next result isolates the correct algebraic consistency condition of the affine quadratic-affine stationary synthesis.}

{\color{black}\begin{proposition}
\label{prop:affine_consistency}
Assume $P_\star=P_\star^\top$ solves \eqref{eq:ARE-min}, and define
\begin{equation*}
A_{\mathrm{cl}}:=\widetilde A-R^{-1}P_\star,
\qquad
g:=P_\star d+c.
\end{equation*}
Then the stationary identity \eqref{eq:HJB} admits a quadratic-affine solution $V(x)=x^\top P_\star x+2p^\top x$ if and only if
\begin{equation}
\label{eq:compat}
g\in \operatorname{im}(A_{\mathrm{cl}}^\top).
\end{equation}
In that case, $p$ solves
\begin{equation}
\label{eq:bias-eq}
A_{\mathrm{cl}}^\top p+g=0,
\end{equation}
the associated stationary feedback is
\begin{equation}
\label{eq:v-star-affine}
v^\star(x)=-R^{-1}(P_\star x+p),
\end{equation}
and the constant in \eqref{eq:HJB} is
\begin{equation*}
\rho=2p^\top d-p^\top R^{-1}p.
\end{equation*}
\end{proposition}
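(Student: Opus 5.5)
The plan is a direct substitution argument. We plug the ansatz \eqref{eq:ans_aff} into the left-hand side of \eqref{eq:HJB}, carry out the pointwise minimization in $v$ explicitly, and match coefficients of the resulting quadratic polynomial in $x$ against the constant $\rho$.

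\emph{Step 1: minimize in $v$.} With $\nabla V(x)=2(P_\star x+p)$, the bracket in \eqref{eq:HJB} equals
\[
x^\top\widetilde Qx+v^\top Rv+2c^\top x+2(P_\star x+p)^\top(\widetilde Ax+v+d).
\]
Since $R\succ0$, this is strictly convex in $v$. Its unique minimizer is therefore $v^\star(x)=-R^{-1}(P_\star x+p)$, which is \eqref{eq:v-star-affine}, and the minimal value subtracts $(P_\star x+p)^\top R^{-1}(P_\star x+p)$.

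\emph{Step 2: expand the minimized expression and sort it by degree in $x$.}
\begin{itemize}
\item The quadratic part is $x^\top(\widetilde A^\top P_\star+P_\star\widetilde A-P_\star R^{-1}P_\star+\widetilde Q)x$. It vanishes identically because $P_\star$ is a symmetric solution of \eqref{eq:ARE-min}. Here we use symmetrization: $2x^\top P_\star\widetilde Ax=x^\top(\widetilde A^\top P_\star+P_\star\widetilde A)x$.
\item The linear part is $2x^\top\big(c+P_\star d+\widetilde A^\top p-P_\star R^{-1}p\big)=2x^\top(A_{\mathrm{cl}}^\top p+g)$, using $P_\star^\top=P_\star$ and $(R^{-1})^\top=R^{-1}$.
\item The constant part is $2p^\top d-p^\top R^{-1}p$.
\end{itemize}

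\emph{Step 3: read off the equivalence.} The identity \eqref{eq:HJB} holds for every $x$ with constant $\rho$ if and only if the linear coefficient vanishes, i.e.\ \eqref{eq:bias-eq} holds, and $\rho=2p^\top d-p^\top R^{-1}p$. The quadratic coefficient imposes nothing further, by the ARE. A vector $p$ solving $A_{\mathrm{cl}}^\top p=-g$ exists exactly when $-g\in\operatorname{im}(A_{\mathrm{cl}}^\top)$, which is equivalent to \eqref{eq:compat} since the image is a subspace. For the ``only if'' direction: if some quadratic-affine $V$ with this $P_\star$ satisfies \eqref{eq:HJB}, then the vanishing of the linear term forces \eqref{eq:bias-eq}, hence \eqref{eq:compat}.

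\emph{Main obstacle.} There is no deep difficulty; the one point to handle with care is the bookkeeping in Step 2. The cross terms $2p^\top\widetilde Ax$ and $-2x^\top P_\star R^{-1}p$ must be correctly collected into $A_{\mathrm{cl}}^\top p$, which uses the symmetry of $P_\star$ and $R$. It should also be noted that uniqueness of $p$ is not claimed: when $A_{\mathrm{cl}}$ is singular, any solution of \eqref{eq:bias-eq} works, and $\rho$ is evaluated at the chosen $p$.
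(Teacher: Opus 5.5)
Your proposal is correct and follows the paper's proof essentially line by line. You minimize in $v$ using $R\succ0$ to get $v^\star(x)=-R^{-1}(P_\star x+p)$, cancel the quadratic terms via the Riccati equation, and reduce the identity to the vanishing of the linear coefficient $A_{\mathrm{cl}}^\top p+g$, with the leftover constant giving $\rho=2p^\top d-p^\top R^{-1}p$; your explicit collection of the cross terms (using symmetry of $P_\star$ and $R$) and your remark on non-uniqueness of $p$ only add detail to what the paper does.
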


\begin{proof}
Substituting the ansatz
\begin{equation*}
V(x)=x^\top P_\star x+2p^\top x
\end{equation*}
into \eqref{eq:HJB} gives
\begin{equation*}
\nabla V(x)=2P_\star x+2p.
\end{equation*}
Since $R\succ0$, the minimization with respect to $v$ is well posed and yields
\begin{equation*}
v^\star(x)=-R^{-1}(P_\star x+p),
\end{equation*}
which is \eqref{eq:v-star-affine}. Replacing $v$ by $v^\star(x)$ in \eqref{eq:HJB}, the quadratic terms cancel by construction of $P_\star$ through the Riccati equation \eqref{eq:ARE-min}. The remaining expression is
\begin{equation*}
2x^\top(A_{\mathrm{cl}}^\top p+g)+2p^\top d-p^\top R^{-1}p.
\end{equation*}
Hence the stationary identity holds for all $x$ if and only if
\begin{equation*}
A_{\mathrm{cl}}^\top p+g=0.
\end{equation*}
This linear equation is solvable if and only if $g\in\operatorname{im}(A_{\mathrm{cl}}^\top)$, equivalently $z^\top g=0$ for every $z\in\ker(A_{\mathrm{cl}})$. When it is solvable, the constant term is absorbed into $\rho$, namely
\begin{equation*}
\rho=2p^\top d-p^\top R^{-1}p.
\end{equation*}
\end{proof}}

\begin{remark}
\label{rem:boundary}
\green{No Hurwitz assumption on $A_{\mathrm{cl}}$ is imposed in Proposition~\ref{prop:affine_consistency}. If $V$ and $v^\star$ satisfy \eqref{eq:HJB}, then along any closed-loop trajectory of}
\begin{equation*}
\green{\dot x=A_{\mathrm{cl}}x+d-R^{-1}p}
\end{equation*}
\green{one has}
\begin{equation*}
\widetilde\ell(x(t),v^\star(x(t)))-\rho
=
-\frac{\dif}{\dif t}V(x(t)).
\end{equation*}
\green{Hence, for every $T>0$,}
\begin{equation*}
\int_0^T \big(\widetilde\ell(x(t),v^\star(x(t)))-\rho\big)\,\dif t
=
V(x(0))-V(x(T)).
\end{equation*}
\green{Therefore $\rho$ may be interpreted as an average cost rate along trajectories for which $V(x(T))/T\to0$ \citep{bardi1997optimal}, but such an asymptotic interpretation is not imposed a priori in the present free-endpoint setting.}
\end{remark}

The proposition shows that the affine stationary synthesis is governed by the bias equation \eqref{eq:bias-eq}. When \eqref{eq:compat} holds, the affine forcing $g=P_\star d+c$ can be absorbed by a constant bias vector $p$, and the stationary quadratic-affine construction is well defined. When \eqref{eq:compat} fails, the forcing has a component along a null direction of the closed-loop operator $A_{\mathrm{cl}}$, and no quadratic-affine stationary identity of the above form exists.

\section{Pathological Regimes and Free-Endpoint Behaviors}
\label{sec:problems}

The aim of this section is to discuss, in a general way, the different pathologies that can arise when the weights in the performance index are chosen outside the strictly positive-definite regime identified by Lemma~\ref{lem:PDP} and Corollary~\ref{cor:PDP2}. The detailed algebraic constructions are collected in Appendix~\ref{app:examples}, while here, we focus on the qualitative meaning of the examples and on the consequences of entering an indefinite or semidefinite LQ regime.

The common mechanism behind all the examples is the following. Once the effective reduced stage cost loses strict positive definiteness, minimizing the cost no longer automatically enforces the qualitative properties that motivated the recommendation design. The problem may still be meaningful from the viewpoint of Riccati theory, and it may admit extremal Riccati solutions, finite values, and in some cases even an attained free-endpoint optimum. However, these objects need not correspond to a recommendation policy that stabilizes the opinion dynamics, moderates disagreement, or remains implementable as an admissible input. Thus, outside the strictly positive-definite regime, there can be a mismatch between the mathematical optimization criterion and the intended behavioral objective of the platform.

The first type of pathology appears in the sign-indefinite regime. In this case, some state directions are not sufficiently penalized by the reduced cost and may even be effectively rewarded. The free-endpoint indefinite LQ theory of~\cite{trentelman1989regular} may still provide a distinguished Riccati solution and, in favorable cases, an optimal feedback. Nevertheless, the closed-loop matrix generated by that feedback can retain an unstable free mode. This means that the optimization problem is well posed in the free-endpoint sense and the optimizer exists, but the resulting recommendation policy does not stabilize the opinion dynamics. From the viewpoint of the application, this is a central warning: in an indefinite regime, cost optimality alone does not guarantee boundedness of opinions, moderation, or suppression of destabilizing amplification mechanisms. This phenomenon is illustrated in Appendix~\ref{app:examples}, Example~\ref{ex:Ex1}.

A second and distinct pathology is loss of attainability. In this case, the reduced stage cost is again sign-indefinite, and the algebraic Riccati equation still admits the extremal solutions $P_-$ and $P_+$. Hence the problem is not ill posed at the level of Riccati solvability. However, the attainability condition
\begin{equation*}
\ker(P_+-P_-)\subseteq\ker(P_-)
\end{equation*}
fails. As a consequence, the homogeneous problem has a finite infimum, but no admissible control attains it for initial conditions with a component along the nonattainable direction. The optimization problem therefore defines an ideal lower bound, but this bound cannot be realized by any admissible recommendation input. This is not merely an instability issue, and the optimal recommendation policy itself does not actually exist as an implementable object. The corresponding construction is given in Appendix~\ref{app:examples}, Example~\ref{ex:finite_unattained}.

The third pathology shows that even the semidefinite case can be problematic. Here the reduced stage cost is nonnegative, so the issue is not the presence of negative directions in the cost. Instead, the problem is that some state directions are invisible to the performance index. If $\widetilde Q\succeq0$ but $\widetilde Q\not\succ0$, and the pair $(\widetilde Q^{1/2},\widetilde A)$ is not detectable, then an unstable mode can remain completely unpenalized. In the extreme case, the zero input is optimal because it minimizes control effort, even though the corresponding state trajectory diverges. This example shows that semidefiniteness is not automatically benign: without detectability, an optimal recommendation policy may simply ignore the unstable opinion directions that the design was meant to control. This case is presented in Appendix~\ref{app:examples}, Example~\ref{ex:semidefinite_nondetectable}.

The affine terms introduce another layer of possible failures. The fourth example shows that affine compatibility can hold exactly, while the free-endpoint instability survives. In that case, the bias equation is solvable, the stationary quadratic-affine identity is well defined, and the affine feedback can be constructed. However, this algebraic consistency does not imply that the resulting closed-loop dynamics are stable. If the homogeneous free-endpoint feedback leaves an unstable free mode active, an affine translation can preserve the same unstable direction. Therefore, affine compatibility alone is not enough to recover the qualitative properties that the recommendation design was supposed to enforce. This mechanism is shown in Appendix~\ref{app:examples}, Example~\ref{ex:free_unstable_aff}.

The fifth example isolates a different obstruction, which is specific to the affine stationary synthesis. After fixing a homogeneous Riccati solution $P_\star$, define
\begin{equation*}
A_{\mathrm{cl}}:=\widetilde A-R^{-1}P_\star,
\qquad
g:=P_\star d+c.
\end{equation*}
The quadratic-affine stationary construction requires the bias equation
\begin{equation*}
A_{\mathrm{cl}}^\top p+g=0
\end{equation*}
to be solvable, or equivalently
\begin{equation*}
g\in\operatorname{im}(A_{\mathrm{cl}}^\top).
\end{equation*}
If this condition fails, no vector $p$ can absorb the affine forcing, and no stationary quadratic-affine identity of the intended form can be constructed. Here the failure occurs before any question of closed-loop stability: the stationary affine controller itself cannot be obtained from the proposed quadratic-affine ansatz. This obstruction is constructed explicitly in Appendix~\ref{app:examples}, Example~\ref{ex:affine_no_control}.

Taken together, the five examples show that leaving the strictly positive-definite regime can produce several different forms of failure. One may obtain (i) an optimal policy that exists but leaves unstable modes active, (ii) a finite value that is not attained by any admissible policy, (iii) a semidefinite optimum that ignores unstable opinion directions, (iv) an affine correction that is algebraically compatible but preserves the unstable homogeneous mode, or (v) an affine problem for which the stationary correction is algebraically inconsistent. These are distinct control-theoretic phenomena, but they have the same interpretation for recommendation design: the selected performance index no longer reliably encodes the intended behavior of the platform.

For this reason, Lemma~\ref{lem:PDP} and Corollary~\ref{cor:PDP2} should not be read only as technical sufficient conditions. In the present modeling framework, they act as design guardrails on the relative weights assigned to engagement, polarization, baseline preservation, recommendation mismatch, graph-based exposure regularization, and control effort. If engagement is over-rewarded relative to the regularizing terms, the resulting infinite-horizon problem may still possess a rich Riccati structure, but the optimal synthesis can lose its intended technological meaning: it may fail to produce bounded opinion trajectories, meaningful regulation of disagreement, or an implementable recommendation policy.

\section{Discussion and conclusion}
\label{sec:disc}
In this paper, we formulated recommendation design over social networks as a closed-loop optimal control problem with explicit trade-offs between engagement, polarization, baseline preservation, graph-based exposure regularization, control effort, and recommendation mismatch. On the mathematical side, the homogeneous reduced problem is connected to the established free-endpoint indefinite LQ framework, while the affine extension is treated more conservatively through a direct quadratic-affine stationary verification calculation. This separation allows us to distinguish between pathologies that are already present in the homogeneous problem and obstructions that arise only when the affine drift and the linear terms in the cost are reintroduced.

Our analysis and examples promote the idea that the spectral conditions given in Lemma~\ref{lem:PDP} and Corollary~\ref{cor:PDP2} should be interpreted as design constraints on the performance-index weights. These conditions tie the engagement reward to the amount of penalization required for polarization, deviation from the uncontrolled baseline, recommendation mismatch, graph-based exposure regularization, and control effort. If engagement is overweighted relative to these regularizing terms, the effective quadratic form may become sign-indefinite or semidefinite, pushing the design into a free-endpoint LQ regime in which closed-loop stability is no longer guaranteed by the performance index itself. In that regime, stability must be enforced as an additional constraint, which is less satisfactory than encoding the desired user--platform interaction directly through a well-posed cost.

For the affine formulation, the stationary Hamiltonian identity is used only as an algebraic verification tool built on a fixed homogeneous Riccati solution. This choice preserves the free-endpoint viewpoint of~\cite{trentelman1989regular}, and the affine correction may be compatible or incompatible, and even when it is compatible, it may still preserve a free unstable mode, as illustrated in Example~\ref{ex:free_unstable_aff}. Thus, affine consistency should not be confused with recovery of the qualitative properties that motivated the recommendation design.

The complementary positive-definite regime remains an important direction of ongoing work. When $\widetilde\ell_{sq}(x,v)$ is strictly positive definite, the reduced homogeneous problem belongs to the classical continuous-time infinite-horizon LQ framework~\citep{anderson2007optimal}. The cost is coercive, bounded below, and the usual stabilizing interpretation of the optimal feedback is recovered. Similarly, when the spectral bounds of Lemma~\ref{lem:PDP} and Corollary~\ref{cor:PDP2} are satisfied, the reduced affine-quadratic problem lies in the standard strictly convex regime. In this regime, the same spectral conditions can be used not only as diagnostic tests, but also as constructive constraints for selecting admissible weights in the performance index.

A natural next step is therefore to exploit this well-posed region to synthesize structured recommendation policies. In particular, one would like to impose locality, sparsity, or graph-induced information constraints on the feedback law, so that the recommendation policy is compatible with the social-network structure and with distributed or partially local platform architectures. The development of structured and decentralized certificates for recommendation synthesis over social networks is currently under investigation. Such certificates would aim to preserve the stability and well-posedness guarantees of the positive-definite regime while avoiding centralized feedback laws that may be unrealistic for large-scale platforms.

The present work also has modeling limitations that point to further extensions. We assumed full-state feedback and a time-invariant, known interaction structure, encoded in $L$, $C$, and $A_a$. In practice, the social graph, the inter-topic logic, and the anchoring strengths should be estimated from data, and may be time-varying or uncertain. Extending the analysis to partial observability, output feedback, model uncertainty, and recursively learned parameters is therefore a natural next step.

Likewise, we considered quadratic penalties, static state feedback, and unconstrained inputs. Incorporating additional constraints on exposure, fairness, safety, or variety of the recommended content would better reflect real-world platforms, but would also lead to constrained, nonlinear, or dynamic optimal control problems. Similarly, constraints on the pool of admissible recommendation inputs, such as saturation, quantization, or restrictions on the input matrix, would be necessary to model platforms in which the recommender cannot propose arbitrary opinion-shaping inputs.

Finally, while the weights in the performance index can be heterogeneous across users and topics, they are treated here as fixed design and modeling parameters. Embedding the proposed spectral guardrails into learning-based recommendation pipelines would allow engagement-oriented objectives to be tuned subject to stability and polarization constraints. This would connect the present model-based approach with data-driven recommendation architectures~\citep{chandrasekaran2024network}, and would provide a route toward recommendation systems whose engagement objectives are optimized without leaving the well-posed region identified by the control-theoretic analysis.

\bibliography{Bib}

\appendix

\section{Examples}
\label{app:examples}
\begin{example}[Attained optimum with unstable free mode]
\label{ex:Ex1}
Consider the optimization problem \eqref{eq:prob_full_alt}. Let $n=1$, $m=2$, and choose
\begin{align*}
&A_a=1,\qquad
C=\begin{bmatrix}
1 & \xi\\
0 & \tfrac12
\end{bmatrix},
\qquad
d=c=\begin{bmatrix}0\\0\end{bmatrix},
\qquad
R=I_2,\\
&N=\diag(-2-\eta,\,-\tfrac52+\eta),\\
&Q=\diag\!\Big((2+\eta)^2,\;(-\tfrac52+\eta)^2-\beta\Big),
\end{align*}
with $0<\eta<\frac52$,
$\xi\in(-1,0)\cup(0,1)$, and
$0<\beta<\min\!\left\{\eta^2,\big(\tfrac52-\eta\big)^2\right\}$.
Then, by \eqref{eq:subs},
\begin{equation*}
\widetilde A=
\begin{bmatrix}
\eta & \xi\\[2pt]
0 & -\eta
\end{bmatrix},
\qquad
\widetilde Q=\diag(0,-\beta).
\end{equation*}
Hence the reduced homogeneous stage cost is sign-indefinite and the first state is completely unpenalized.

Let
\begin{equation*}
P=P^\top=
\begin{bmatrix}
P_{11}&P_{12}\\[2pt]
P_{12}&P_{22}
\end{bmatrix}
\end{equation*}
solve the ARE \eqref{eq:ARE-min}. Expanding \eqref{eq:ARE-min} gives
\begin{equation}
\label{ex1:ARE-system-fixed}
\begin{cases}
2\eta\,P_{11}-(P_{11}^2+P_{12}^2)=0,\\[2pt]
\xi\,P_{11}-P_{12}(P_{11}+P_{22})=0,\\[2pt]
2\xi\,P_{12}-2\eta\,P_{22}-(P_{12}^2+P_{22}^2)-\beta=0.
\end{cases}
\end{equation}

First set $P_{12}=0$. Since $\xi\neq 0$, the second equation implies $P_{11}=0$, and the third becomes
\begin{equation*}
P_{22}^2+2\eta P_{22}+\beta=0.
\end{equation*}
Defining
\begin{equation*}
\Delta:=\sqrt{\eta^2-\beta}\in(0,\eta),
\end{equation*}
the two diagonal ARE solutions are
\begin{equation*}
\diag(0,-\eta-\Delta),
\qquad
\diag(0,-\eta+\Delta).
\end{equation*}
Among them,
\begin{equation*}
P_-:=\diag(0,-\eta-\Delta)
\end{equation*}
is antistabilizing, because
\begin{equation*}
A_-:=\widetilde A-P_-=
\begin{bmatrix}
\eta & \xi\\[2pt]
0 & \Delta
\end{bmatrix}
\end{equation*}
has eigenvalues $\eta>0$ and $\Delta>0$.

Now consider $P_{12}\neq 0$. Parameterize the first equation in \eqref{ex1:ARE-system-fixed} as
\begin{equation}
\label{ex1:param-fixed}
P_{11}=\frac{2\eta}{1+\iota^2},
\qquad
P_{12}=\frac{2\eta\,\iota}{1+\iota^2},
\qquad
\iota\in\reals\setminus\{0\}.
\end{equation}
From the second equation,
\begin{equation}
\label{ex1:p22-fixed}
P_{22}
=
\frac{\xi P_{11}}{P_{12}}-P_{11}
=
\frac{\xi}{\iota}-\frac{2\eta}{1+\iota^2}.
\end{equation}
Substituting \eqref{ex1:param-fixed}--\eqref{ex1:p22-fixed} into the third equation yields
\begin{equation*}
\beta\,\iota^2-2\xi\eta\,\iota+\xi^2=0.
\end{equation*}
Hence, for $\gamma\in\{+1,-1\}$,
\begin{equation*}
\iota_\gamma=\frac{\xi(\eta+\gamma\Delta)}{\beta}.
\end{equation*}
Define
\begin{equation*}
\zeta_\gamma:=\beta^2+\xi^2(\eta+\gamma\Delta)^2.
\end{equation*}
The corresponding two off-diagonal ARE solutions are
\begin{equation*}
P^{(\gamma)}=
\begin{bmatrix}
\dfrac{2\eta\beta^2}{\zeta_\gamma}
&
\dfrac{2\eta\beta\xi(\eta+\gamma\Delta)}{\zeta_\gamma}
\\[10pt]
\dfrac{2\eta\beta\xi(\eta+\gamma\Delta)}{\zeta_\gamma}
&
\eta-\gamma\Delta-\dfrac{2\eta\beta^2}{\zeta_\gamma}
\end{bmatrix}.
\end{equation*}

A direct calculation gives $\operatorname{tr}(\widetilde A-P^{(\gamma)})=-(\eta-\gamma\Delta)$ and
$\det(\widetilde A-P^{(\gamma)})=-\gamma\,\eta\Delta.$
Therefore:
$\widetilde A-P^{(-)}$ is Hurwitz,
and
$\widetilde A-P^{(+)}$ has one positive and one negative eigenvalue.
Hence the maximal stabilizing solution is
\begin{equation*}
\widehat P_+:=P^{(-)}.
\end{equation*}

We now determine the supported free-endpoint solution. Since
\begin{equation*}
\ker(P_-)=\mathrm{span}(e_1),
\qquad
A_-e_1=\eta e_1,
\end{equation*}
it follows that
\begin{equation*}
(\ker P_-\mid A_-)=\mathrm{span}(e_1).
\end{equation*}
Because $\eta,\Delta>0$, one has $\mathcal X^+(A_-)=\reals^2$, and therefore the free subspace is
\begin{equation*}
\mathcal N=(\ker P_-\mid A_-)\cap \mathcal X^+(A_-)=\mathrm{span}(e_1).
\end{equation*}

Write
\begin{equation*}
\widehat P_+=
\begin{bmatrix}
a & b\\
b & c
\end{bmatrix},
\end{equation*}
where
\begin{equation*}
a=\frac{2\eta(\eta+\Delta)^2}{(\eta+\Delta)^2+\xi^2},
\qquad
b=\frac{2\eta\xi(\eta+\Delta)}{(\eta+\Delta)^2+\xi^2},
\qquad
c=\eta+\Delta-a.
\end{equation*}
Let
\begin{equation*}
\widehat\Delta:=\widehat P_+-P_-.
\end{equation*}
Since $\mathcal N^\perp=\mathrm{span}(e_2)$, the correct supporting projector is the projection onto $\mathcal N$ along $\widehat\Delta^{-1}(\mathcal N^\perp)$. Now
\begin{equation*}
\widehat\Delta^{-1}(\mathcal N^\perp)
=
\{x\in\reals^2:\ \widehat\Delta x\in\mathrm{span}(e_2)\}
=
\ker\!\big(e_1^\top\widehat\Delta\big)
=
\ker\!\begin{bmatrix}a&b\end{bmatrix}.
\end{equation*}
Hence
\begin{equation*}
\widehat\Delta^{-1}(\mathcal N^\perp)
=
\mathrm{span}\!\begin{bmatrix}-b\\ a\end{bmatrix},
\end{equation*}
and the associated projector is
\begin{equation*}
\Pi_{\mathcal N}
=
e_1\begin{bmatrix}1&b/a\end{bmatrix}
=
e_1\begin{bmatrix}1&\dfrac{\xi}{\eta+\Delta}\end{bmatrix}.
\end{equation*}

The supported free-endpoint ARE solution is therefore
\begin{equation*}
P_\star
=
P_-\,\Pi_{\mathcal N}
+
\widehat P_+(I-\Pi_{\mathcal N});
\end{equation*}
see \cite{trentelman1989regular} for further detail.
A direct computation yields
\begin{equation*}
P_\star=\diag(0,-\eta+\Delta).
\end{equation*}
Moreover,
\begin{equation*}
\ker(\widehat P_+-P_-)=\{0\}\subseteq \ker(P_-),
\end{equation*}
so, by the free-endpoint result recalled in Section~\ref{sec:sols_homo}, optimal controls exist for every initial condition and are generated by
\begin{equation*}
v^\star(x)=-P_\star x.
\end{equation*}

The implemented closed-loop matrix is
\begin{equation*}
A_{\mathrm{cl}}
=
\widetilde A-P_\star
=
\begin{bmatrix}
\eta & \xi\\[2pt]
0 & -\Delta
\end{bmatrix},
\end{equation*}
hence
\begin{equation*}
\sigma(A_{\mathrm{cl}})=\{\eta,-\Delta\}.
\end{equation*}
Therefore the supported free-endpoint optimal controller exists, but the closed loop still retains the unstable mode $\eta>0$.
\end{example}

\begin{example}[Finite but unattained homogeneous infimum]
\label{ex:finite_unattained}
Consider \eqref{eq:prob_full_alt} with $n=1$, $m=2$, and choose
\begin{equation*}
A_a=2,\quad
C=I_2,\quad
d=c=\begin{bmatrix}0\\0\end{bmatrix},\quad
R=I_2,
\end{equation*}
\begin{equation*}
N=-2I_2,
\qquad
Q=\operatorname{diag}(3,4).
\end{equation*}
Then
\begin{equation*}
\widetilde A=-I_2,
\qquad
\widetilde Q=\operatorname{diag}(-1,0).
\end{equation*}

The reduced dynamics is
\begin{equation*}
\dot x=-x+v,
\end{equation*}
and the cost is
\begin{equation*}
\widetilde J(v)=\int_0^\infty \big(-x_1(t)^2+v_1(t)^2+v_2(t)^2\big)\,dt.
\end{equation*}

The ARE
\begin{equation*}
-\!2P-P^2+\widetilde Q=0
\end{equation*}
has exactly two symmetric solutions,
\begin{equation*}
P_-=\operatorname{diag}(-1,-2),
\qquad
P_+=\operatorname{diag}(-1,0).
\end{equation*}
Hence
\begin{equation*}
\ker(P_+-P_-)=\operatorname{span}(e_1),
\qquad
\ker(P_-)=\{0\},
\end{equation*}
so the attainability condition in \cite[Thm. 5]{trentelman1989regular} fails.

Moreover, the first coordinate admits the exact identity
\begin{equation*}
v_1^2-x_1^2=(v_1-x_1)^2+\frac{d}{dt}(x_1^2),
\qquad
\dot x_1=-x_1+v_1.
\end{equation*}
Therefore, for every $T>0$,
\begin{equation*}
\begin{aligned}
&\int_0^T \big(v_1^2-x_1^2+v_2^2\big)\,dt \\
&=
-x_1(0)^2+x_1(T)^2+\int_0^T \big((v_1-x_1)^2+v_2^2\big)\,dt,
\end{aligned}
\end{equation*}
and hence every admissible control satisfies
\begin{equation*}
\widetilde J(v)\ge -x_1(0)^2.
\end{equation*}

This lower bound is not attained when $x_1(0)\neq 0$. Indeed, equality would require
\begin{equation*}
v_1(t)=x_1(t)\quad\text{a.e.},\qquad v_2(t)=0\quad\text{a.e.},
\qquad x_1(T)\to0,
\end{equation*}
but $v_1=x_1$ implies $\dot x_1=0$, so $x_1(t)\equiv x_1(0)\neq0$, which leads to a contradiction.

On the other hand, for any $\varepsilon>0$, the feedback
\begin{equation*}
v_1=(1-\varepsilon)x_1,\qquad v_2=0
\end{equation*}
yields $x_1(t)=e^{-\varepsilon t}x_1(0)$ and cost
\begin{equation*}
\widetilde J_\varepsilon
=
\left(-1+\frac{\varepsilon}{2}\right)x_1(0)^2
\to -x_1(0)^2
\quad\text{as }\varepsilon\rightarrow0.
\end{equation*}
Hence, the infimum is finite but unattained. The geometric obstruction is shown
in Fig.~\ref{fig:finite_unattained}: the nontrivial subspace
$\ker(P_+-P_-)=\operatorname{span}(e_1)$ is not contained in
$\ker(P_-)=\{0\}$. Therefore, every initial condition with a nonzero component
along $e_1$, namely every $x_0$ such that $x_1(0)\neq0$, gives rise to the
nonattainability phenomenon. These initial conditions include those lying exactly on
the line $\operatorname{span}(e_1)$ (in red in Fig.~\ref{fig:finite_unattained}), where the obstruction is purely along the
nonattainable direction, as well as initial conditions on either side of that
line, provided that their projection onto $e_1$ is nonzero. In all these cases,
the cost can approach the lower bound $-x_1(0)^2$, but no admissible input
attains it.
\begin{figure}[tbp]
    \centering
    \includegraphics[width=0.6\linewidth]{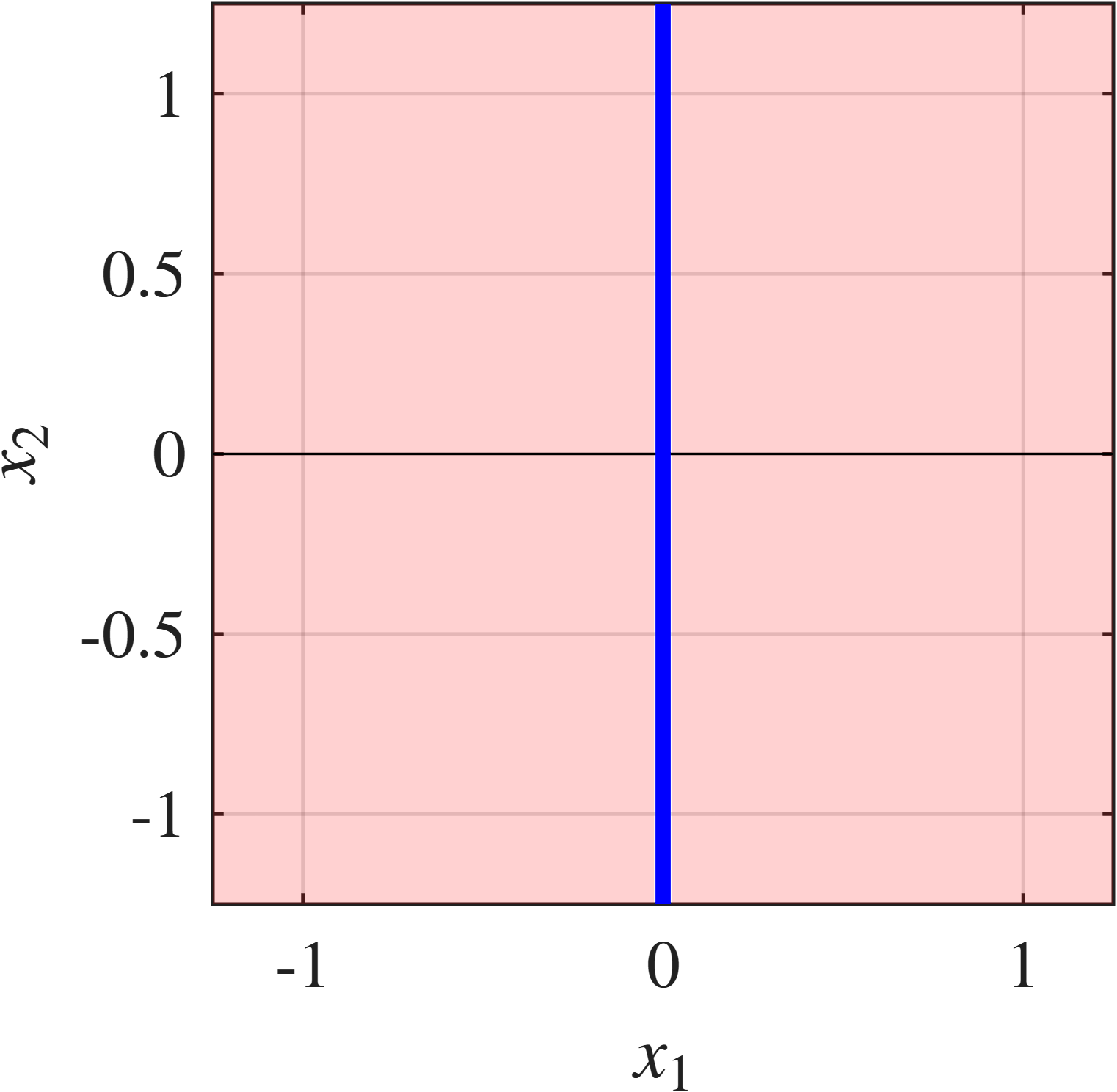}
    \caption{Problematic and non-problematic initial conditions associated with the finite but unattained homogeneous infimum in Example~\ref{ex:finite_unattained}. The light red region represents the set of initial conditions with $x_1(0)\neq0$, for which the infimum is finite but not attained. The blue set represents the subspace $x_1(0)=0$, for which the obstruction is absent and the infimum is attained.}
    \label{fig:finite_unattained}
\end{figure}
\end{example}

\begin{example}[Semidefinite optimum without detectability]
\label{ex:semidefinite_nondetectable}
Consider the optimization problem \eqref{eq:prob_full_alt}, let $n=1$, $m=2$, and choose
\begin{align*}
&A_a= 1, \quad   C=\begin{bmatrix}
1 & \xi \\ \xi & 1
\end{bmatrix}, \\
&L=0, \quad d=c=\begin{bmatrix}0\\0\end{bmatrix},\qquad
R= I_2, \\
&N= -\eta I_2,\quad Q=\eta^2 I_2,
\end{align*}
and thus, given \eqref{eq:subs},
\begin{align*}
    &\widetilde A=\begin{bmatrix}
        -2+\eta & \xi \\[2pt]
        \xi & -2+\eta
    \end{bmatrix}, \quad \widetilde Q= O_2,
\end{align*}
with $\eta>2+ |\xi|>0$, $\xi \in  (-1,0) \cup (0,1)$.
The ARE \eqref{eq:ARE-min} becomes
\begin{equation*}
    \widetilde A P + P \widetilde A - P^2 = 0,\qquad P=P^\top\in\mathbb R^{2\times2}.
\end{equation*}
\green{Since $\widetilde Q=O_2$, the reduced cost is simply
\begin{equation*}
\widetilde J(v)=\int_0^\infty v(t)^\top v(t)\,\dif t \ge 0.
\end{equation*}
Hence the zero input
\begin{equation*}
v^\star(x)\equiv 0
\end{equation*}
is optimal and achieves the value $0$.

Under this optimal policy, the closed-loop matrix is
\begin{equation*}
A_{\mathrm{cl}}=\widetilde A,
\end{equation*}
whose eigenvalues are $\lambda_1>0$ and $\lambda_2>0$. Therefore the optimal free-endpoint policy does not stabilize the system, and the state diverges. Equivalently, the pair $(\widetilde Q^{1/2},\widetilde A)$ is not detectable.}
\end{example}

\begin{example}[Affine-compatible unstable free mode]
\label{ex:free_unstable_aff}
\green{Consider the affine reduced problem \eqref{eq:prob_full_alt}, with $n=10$,
$m=2$, and select the interconnection graph to be complete and undirected with
uniform edge weight $1/5$. Let $L$ be the corresponding graph Laplacian.
Choose
\begin{equation*}
A_a=I_{10},
\qquad
C=
\begin{bmatrix}
1 & \frac{1}{4}\\
0 & \frac{1}{2}
\end{bmatrix}.
\end{equation*}
The matrix $A_c$ in \eqref{eq:model_vec_comp} is
\begin{equation*}
A_c=
\begin{bmatrix}
-L-2I_{10} & \frac14 I_{10}\\
0 & -L-\frac52 I_{10}
\end{bmatrix}.
\end{equation*}

Let
\begin{equation*}
R=I_{20},
\qquad
N=
\operatorname{diag}
\left(
-\frac52 I_{10},
-2I_{10}
\right),
\end{equation*}
and
\begin{equation*}
Q=
\operatorname{diag}
\left(
\frac{25}{4}I_{10},
\frac{63}{16}I_{10}
\right).
\end{equation*}
Hence $Q\succeq0$, $R\succ0$, and $Q,R,N$ are diagonal. Moreover, by
\eqref{eq:subs},
\begin{equation*}
\widetilde A
=
\begin{bmatrix}
\frac12 I_{10}-L & \frac14 I_{10}\\
0 & -\frac12 I_{10}-L
\end{bmatrix}=
\begin{bmatrix}
\widetilde A_{1,1} & \widetilde A_{1,2}\\
0 & \widetilde A_{2,2}
\end{bmatrix},
\end{equation*}
and
\begin{equation*}
\widetilde Q
=
\operatorname{diag}
\left(
0,
-\frac1{16}I_{10}
\right)=
\begin{bmatrix}
0 & 0\\
0 & \widetilde Q_2
\end{bmatrix}.
\end{equation*}
Thus the reduced homogeneous stage cost is sign-indefinite.

Furthermore, let
\begin{equation*}
X_\circ=
\begin{bmatrix}
0.8 & 0.4\\
0.6 & -0.2\\
0.9 & 0.1\\
0.4 & -0.3\\
1.0 & 0.2\\
0.2 & -0.1\\
0.7 & 0.3\\
0.5 & -0.4\\
0.9 & 0.2\\
0.3 & -0.2
\end{bmatrix}.
\end{equation*}

Define
\begin{equation*}
\Pi:=\frac1{10}\mathbbm{1}_{10}\mathbbm{1}_{10}^{\top},
\qquad
\Pi_\perp:=I_{10}-\Pi .
\end{equation*}
Since the graph is complete with uniform edge weight $1/5$,
\begin{equation*}
L=2\Pi_\perp .
\end{equation*}
Consequently,
\begin{equation*}
L\Pi=0,
\qquad
L\Pi_\perp=2\Pi_\perp .
\end{equation*}
Thus $\Pi$ and $\Pi_\perp$ are the spectral projectors of $L$ associated with
the eigenvalues $0$ and $2$, respectively.

Every state and affine forcing can be decomposed along the Laplacian
eigenbasis as
\begin{equation*}
x
=
\sum_{k=0}^{9}
\begin{bmatrix}
\xi_{1,k}u_k\\
\xi_{2,k}u_k
\end{bmatrix},
\qquad
d
=
\sum_{k=0}^{9}
\begin{bmatrix}
d_{1,k}u_k\\
d_{2,k}u_k
\end{bmatrix},
\end{equation*}
with $u_k$, $k \in \{0,\cdots,9\}$, being an eigenvector of $L$. For each fixed $k$, define
\begin{equation*}
\xi_k:=
\begin{bmatrix}
\xi_{1,k}\\
\xi_{2,k}
\end{bmatrix},
\qquad
d_k:=
\begin{bmatrix}
d_{1,k}\\
d_{2,k}
\end{bmatrix}
\end{equation*}
Since $Lu_k=\lambda_k u_k$, each affine modal component evolves as
\begin{equation*}
\dot \xi_k
=
A_{\lambda_k}\xi_k
+
v_k
+
d_k,
\end{equation*}
before the affine feedback is applied, where
\begin{equation*}
A_{\lambda_k}
=
\begin{bmatrix}
\frac12-\lambda_k & \frac14\\
0 & -\frac12-\lambda_k
\end{bmatrix},
\qquad
Q_{\lambda_k}
=
\begin{bmatrix}
0 & 0\\
0 & -\frac1{16}
\end{bmatrix}.
\end{equation*}

For notational compactness, in the following modal Riccati computations we
write $\lambda$ instead of $\lambda_k$. The corresponding modal ARE is
\begin{equation*}
A_\lambda^\top P_\lambda
+
P_\lambda A_\lambda
-
P_\lambda^2
+
Q_\lambda
=0.
\end{equation*}
Writing
\begin{equation*}
P_\lambda
=
\begin{bmatrix}
r_\lambda & s_\lambda\\
s_\lambda & p_\lambda
\end{bmatrix},
\end{equation*}
direct substitution gives
\begin{align*}
2\left(\frac12-\lambda\right)r_\lambda
-r_\lambda^2-s_\lambda^2&=0,\\
\frac14 r_\lambda-2\lambda s_\lambda
-r_\lambda s_\lambda
-s_\lambda p_\lambda&=0,\\
\frac12 s_\lambda
+
2\left(-\frac12-\lambda\right)p_\lambda
-s_\lambda^2-p_\lambda^2-\frac1{16}&=0.
\end{align*}

We first consider the modal problem associated with $\lambda=0$. In this case
\begin{equation*}
A_0=
\begin{bmatrix}
\eta & \xi\\
0 & -\eta
\end{bmatrix},
\qquad
\eta:=\frac12,
\qquad
\xi:=\frac14,
\qquad
\beta:=\frac1{16}.
\end{equation*}
Let
\begin{equation*}
\Delta_0:=\sqrt{\eta^2-\beta}=\frac{\sqrt3}{4}.
\end{equation*}
Since $\beta=\xi^2$, the matrix
\begin{equation*}
P_{-,0}:=
\begin{bmatrix}
0 & 0\\
0 & -\eta-\Delta_0
\end{bmatrix}
\end{equation*}
solves the modal ARE. Indeed,
\begin{equation*}
2(-\eta)(-\eta-\Delta_0)-(-\eta-\Delta_0)^2-\beta
=
\eta^2-\Delta_0^2-\beta
=
0.
\end{equation*}
Moreover,
\begin{equation*}
A_{-,0}:=A_0-P_{-,0}
=
\begin{bmatrix}
\eta & \xi\\
0 & \Delta_0
\end{bmatrix},
\end{equation*}
whose eigenvalues are $\eta>0$ and $\Delta_0>0$. Hence $P_{-,0}$ is the
minimal antistabilizing solution.

The maximal stabilizing solution is
\begin{equation*}
P_{+,0}:=
\begin{bmatrix}
\eta+\Delta_0 & \xi\\
\xi & 0
\end{bmatrix}.
\end{equation*}
Indeed, using $\beta=\xi^2$ and $\Delta_0^2=\eta^2-\beta$, direct
substitution in the modal ARE gives zero, and
\begin{equation*}
A_0-P_{+,0}
=
\begin{bmatrix}
-\Delta_0 & 0\\
-\xi & -\eta
\end{bmatrix},
\end{equation*}
which is Hurwitz.

We now compute the free subspace associated with this modal problem. Since
\begin{equation*}
\ker(P_{-,0})=\operatorname{span}(e_1),
\qquad
A_{-,0}e_1=\eta e_1,
\end{equation*}
one has
\begin{equation*}
(\ker P_{-,0}\mid A_{-,0})
=
\operatorname{span}(e_1).
\end{equation*}
Moreover, since both eigenvalues of $A_{-,0}$ are positive,
\begin{equation*}
\mathcal X^+(A_{-,0})=\mathbb R^2.
\end{equation*}
Therefore
\begin{equation*}
\mathcal N_0
=
(\ker P_{-,0}\mid A_{-,0})
\cap
\mathcal X^+(A_{-,0})
=
\operatorname{span}(e_1).
\end{equation*}
The supporting projection onto $\mathcal N_0$ along
$(P_{+,0}-P_{-,0})^{-1}\mathcal N_0^\perp$ is
\begin{equation*}
\Pi_{\mathcal N_0}
=
\begin{bmatrix}
1 & \dfrac{\xi}{\eta+\Delta_0}\\
0 & 0
\end{bmatrix}.
\end{equation*}
Consequently, the supported free-endpoint solution associated with
$\lambda=0$ is
\begin{equation*}
P_{\star,0}
=
P_{-,0}\Pi_{\mathcal N_0}
+
P_{+,0}(I-\Pi_{\mathcal N_0}).
\end{equation*}
Using
\begin{equation*}
\frac{\xi^2}{\eta+\Delta_0}
=
\frac{\beta}{\eta+\Delta_0}
=
\eta-\Delta_0,
\end{equation*}
one obtains
\begin{equation*}
P_{\star,0}
=
\begin{bmatrix}
0 & 0\\
0 & -\eta+\Delta_0
\end{bmatrix}
=
\begin{bmatrix}
0 & 0\\
0 & -\frac12+\frac{\sqrt3}{4}
\end{bmatrix}.
\end{equation*}

We now consider the modal problems associated with $\lambda=2$. In this case
\begin{equation*}
A_2
=
\begin{bmatrix}
-\alpha & \xi\\
0 & -\delta
\end{bmatrix},
\qquad
\alpha:=\frac32,
\qquad
\delta:=\frac52,
\qquad
\xi:=\frac14.
\end{equation*}
Let
\begin{equation*}
\Delta_2:=\sqrt{\delta^2-\frac1{16}}
=
\frac{3\sqrt{11}}{4}.
\end{equation*}
The stabilizing solution is
\begin{equation*}
P_{+,2}:=
\begin{bmatrix}
0 & 0\\
0 & -\delta+\Delta_2
\end{bmatrix}.
\end{equation*}
Indeed,
\begin{equation*}
2(-\delta)(-\delta+\Delta_2)
-
(-\delta+\Delta_2)^2
-
\frac1{16}
=
\delta^2-\Delta_2^2-\frac1{16}
=
0,
\end{equation*}
and
\begin{equation*}
A_2-P_{+,2}
=
\begin{bmatrix}
-\alpha & \xi\\
0 & -\Delta_2
\end{bmatrix},
\end{equation*}
which is Hurwitz.

The antistabilizing solution can be written explicitly as follows. Define
\begin{equation*}
\theta_-:=\frac{2-\sqrt{11}}{21},
\end{equation*}
and set
\begin{equation*}
P_{-,2}:=
\begin{bmatrix}
r_- & s_-\\
s_- & p_-
\end{bmatrix},
\end{equation*}
where
\begin{equation*}
r_-=-\frac{3}{1+\theta_-^2},
\qquad
s_-=-\frac{3\theta_-}{1+\theta_-^2},
\end{equation*}
and
\begin{equation*}
p_-=-4+\frac{3}{1+\theta_-^2}+\frac{1}{4\theta_-}.
\end{equation*}
Substitution in the modal ARE gives zero, so $P_{-,2}$ solves it.
Moreover,
\begin{equation*}
\sigma(A_2-P_{-,2})
=
\left\{
\frac32,
\frac{3\sqrt{11}}{4}
\right\},
\end{equation*}
and therefore $P_{-,2}$ is antistabilizing. Finally,
\begin{equation*}
\det(P_{-,2})
=
\frac{3(4\theta_- -1)}
{4\theta_-(1+\theta_-^2)}
\neq0.
\end{equation*}
Hence
\begin{equation*}
\ker(P_{-,2})=\{0\}.
\end{equation*}
It follows that
\begin{equation*}
\mathcal N_2
=
(\ker P_{-,2}\mid A_2-P_{-,2})
\cap
\mathcal X^+(A_2-P_{-,2})
=
\{0\}.
\end{equation*}
Therefore the supported free-endpoint solution associated with $\lambda=2$
coincides with the stabilizing branch:
\begin{equation*}
P_{\star,2}=P_{+,2}
=
\begin{bmatrix}
0 & 0\\
0 & -\frac52+\frac{3\sqrt{11}}{4}
\end{bmatrix}.
\end{equation*}

We can now reconstruct the Riccati solution of the networked problem. The
modal solution associated with $\lambda=0$ contributes the scalar
\begin{equation*}
\rho_0:=-\frac12+\frac{\sqrt3}{4},
\end{equation*}
whereas the modal solutions associated with $\lambda=2$ contribute the scalar
\begin{equation*}
\rho_2:=-\frac52+\frac{3\sqrt{11}}{4}.
\end{equation*}
Transforming the modal solution back to the original coordinates gives
\begin{equation*}
P_2
=
\rho_0\Pi+\rho_2\Pi_\perp .
\end{equation*}

The supporting-subspace condition is also inherited from the modal
construction. The only nontrivial modal free subspace is the one associated
with $\lambda=0$, where
\begin{equation*}
\mathcal N_0=\operatorname{span}(e_1),
\end{equation*}
whereas the modal blocks associated with $\lambda=2$ satisfy
\begin{equation*}
\mathcal N_2=\{0\}.
\end{equation*}
Therefore the network-level free subspace is the lift of $\mathcal N_0$
through the eigenvector $u_0$, namely
\begin{equation*}
\mathcal N
=
\operatorname{span}
\left(
\begin{bmatrix}
u_0\\
0
\end{bmatrix}
\right).
\end{equation*}
Thus $P_\star$ is obtained by applying supporting-subspace
construction in \cite{trentelman1989regular} on each modal subsystem and then reconstructing the corresponding
network-level solution.

The closed-loop matrix generated by the homogeneous free-endpoint feedback is
\begin{equation*}
A_{\mathrm{cl}}
=
\widetilde A-P_\star
=
\begin{bmatrix}
\frac12 I_{10}-L & \frac14 I_{10}\\
0 & -\frac12 I_{10}-L-P_2
\end{bmatrix}.
\end{equation*}
Since this matrix is block upper triangular, its spectrum is the union of the
spectra of the two diagonal blocks. The first-topic block satisfies
\begin{equation*}
\frac12 I_{10}-L
=
\frac12\Pi-\frac32\Pi_\perp .
\end{equation*}
Thus the eigenvalue on $\operatorname{im}\Pi$ is $1/2$, while the eigenvalue
on $\operatorname{im}\Pi_\perp$ is $-3/2$.

For the second-topic block,
\begin{equation*}
-\frac12 I_{10}-L-P_2
=
\left(-\frac12-\rho_0\right)\Pi
+
\left(-\frac52-\rho_2\right)\Pi_\perp .
\end{equation*}
By construction,
\begin{equation*}
-\frac12-\rho_0
=
-\frac{\sqrt3}{4}<0,
\end{equation*}
and
\begin{equation*}
-\frac52-\rho_2
=
-\frac{3\sqrt{11}}{4}<0.
\end{equation*}
Therefore all second-topic modes are stable. The only unstable closed-loop
mode is the first-topic mode associated with $\operatorname{im}\Pi$, with
eigenvalue $1/2$.

For the affine part of the example, notice that
\begin{equation*}
d=(I_2\otimes A_a)\operatorname{vec}(X_\circ)
=
\operatorname{vec}(X_\circ)
=
\operatorname{col}(b_1,b_2),
\end{equation*}
with
\begin{equation*}
\bar b_1:=\frac1{10}\mathbbm{1}_{10}^{\top}b_1=0.63,
\qquad
\bar b_2:=\frac1{10}\mathbbm{1}_{10}^{\top}b_2=0.
\end{equation*}
Since 
$A_{\mathrm{cl}}$ is nonsingular, the affine compatibility condition
\begin{equation*}
P_\star d+c\in\operatorname{im}(A_{\mathrm{cl}}^\top)
\end{equation*}
holds automatically. Hence there exists a unique affine vector
$p\in\mathbb R^{20}$ satisfying
\begin{equation}
\label{eq:affind-affine-p}
A_{\mathrm{cl}}^\top p+P_\star d+c=0.
\end{equation}
The stationary affine feedback associated with free-endpoint
construction in \cite{trentelman1989regular} is therefore
\begin{equation*}
v^\star(x)=-P_\star x-p.
\end{equation*}
Consequently, the affine closed-loop dynamics are
\begin{equation*}
\dot x
=
A_{\mathrm{cl}}x+d-p.
\end{equation*}

We now compute the structure of $p$. Partition
\begin{equation*}
p=\operatorname{col}(p_1,p_2),
\qquad
x=\operatorname{col}(x_1,x_2),
\qquad
d=\operatorname{col}(b_1,b_2),
\end{equation*}
with $p_1,p_2,x_1,x_2,b_1,b_2\in\mathbb R^{10}$. Since
\begin{equation*}
P_\star d=\operatorname{col}(0,P_2b_2),
\end{equation*}
the first block of \eqref{eq:affind-affine-p} gives
\begin{equation*}
\left(\frac12 I_{10}-L\right)^\top p_1=0.
\end{equation*}
Because
\begin{equation*}
\frac12 I_{10}-L
=
\frac12\Pi-\frac32\Pi_\perp
\end{equation*}
is nonsingular, it follows that
\begin{equation*}
p_1=0.
\end{equation*}
Thus the affine correction has no constant component on the free first-topic
block. The second block of \eqref{eq:affind-affine-p} gives
\begin{equation*}
\left(-\frac12 I_{10}-L-P_2\right)^\top p_2
+
P_2b_2
=
0.
\end{equation*}
Since $L$ and $P_2$ are symmetric, this is equivalently
\begin{equation*}
p_2
=
-
\left(-\frac12 I_{10}-L-P_2\right)^{-1}P_2b_2.
\end{equation*}
Therefore
\begin{equation*}
p=\operatorname{col}(0,p_2).
\end{equation*}

Under the feedback $v^\star(x)=-P_\star x-p$, the closed-loop has exactly one unstable direction, $x_1$, as illustrated in Fig.~\ref{fig:affine_indefinite_modes}. As a final remark, we stress that the complete graph is used only to make the
modal reconstruction explicit.}
\end{example}
\begin{figure}[t]
    \centering
    \includegraphics[width=\linewidth]{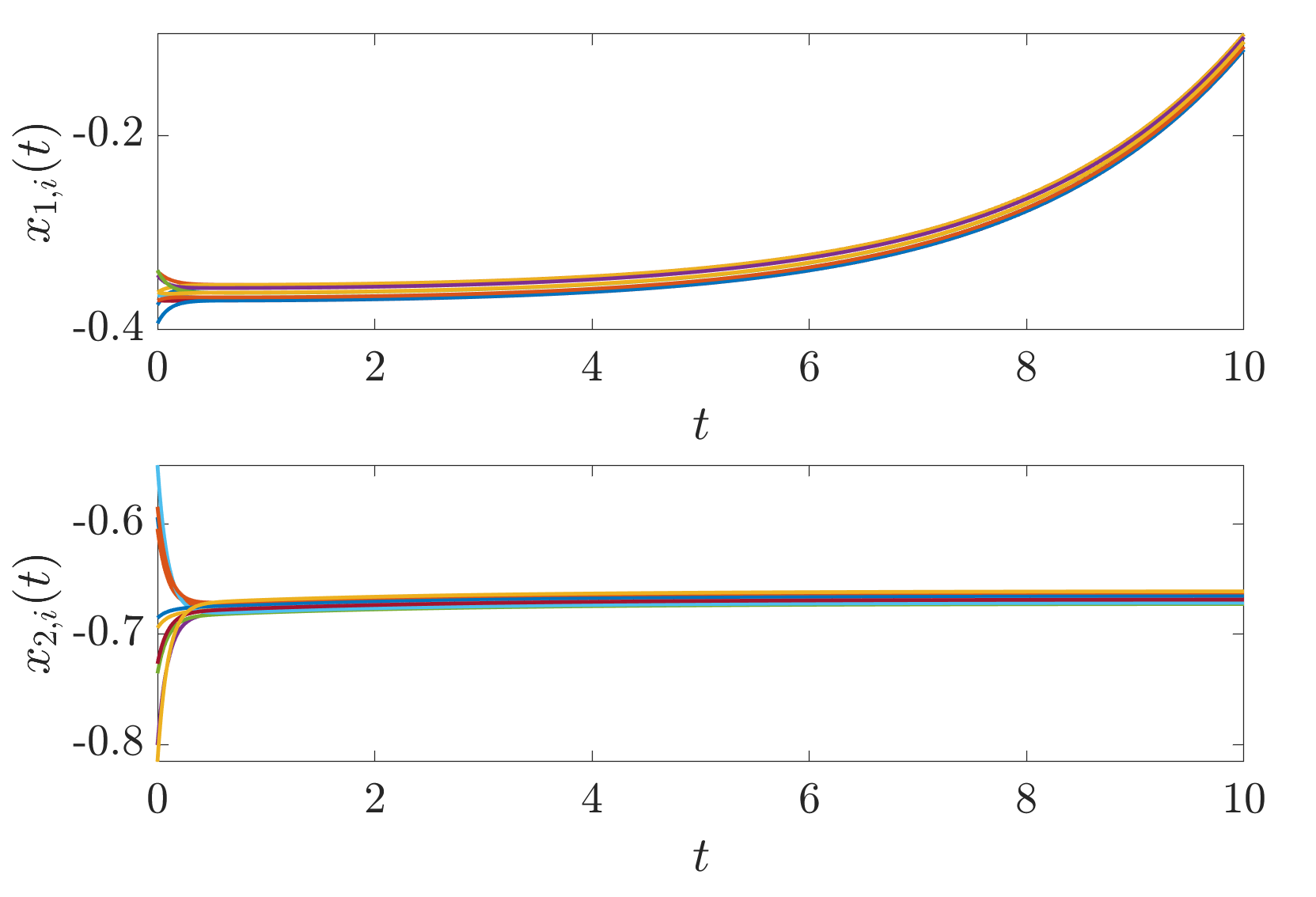}
    \caption{Closed-loop trajectories of the affine indefinite free-endpoint
    solution for the networked two-topic Example~\ref{ex:free_unstable_aff}. The upper panel shows the
    first-topic states $x_{1,i}(t)$, while the lower panel shows the
    second-topic states $x_{2,i}(t)$, for $i=1,\ldots,10$.}
    \label{fig:affine_indefinite_modes}
\end{figure}

\begin{example}[Failure of affine compatibility]
\label{ex:affine_no_control}
{\color{black}
Consider the affine reduced problem \eqref{eq:prob_full_alt}. Let $n=1$, $m=2$, and choose
\begin{equation*}
A_a=1,\qquad
C=
\begin{bmatrix}
1 & \xi\\
0 & \tfrac12
\end{bmatrix},
\qquad
R=I_2,
\end{equation*}
\begin{equation*}
N=\diag(-2,\,-\tfrac52+\eta),
\qquad
Q=\diag\!\Big(4,\;(\tfrac52-\eta)^2-\beta\Big),
\end{equation*}
\begin{equation*}
d=
\begin{bmatrix}
-\kappa/w\\
0
\end{bmatrix},
\qquad
c=
\begin{bmatrix}
\kappa\\
0
\end{bmatrix},
\end{equation*}
with parameters
$0<\eta<\frac52$,
$\xi\in(-1,0)\cup(0,1)$,
$0<\beta<\min\!\left\{\eta^2,\Big(\tfrac52-\eta\Big)^2\right\}$, $\kappa\neq 0$, and
$w>0$.

Then, by \eqref{eq:subs},
\begin{equation*}
\widetilde A=
\begin{bmatrix}
0 & \xi\\[2pt]
0 & -\eta
\end{bmatrix},
\qquad
\widetilde Q=\diag(0,-\beta).
\end{equation*}
Thus the homogeneous reduced stage cost is sign-indefinite, and the first state is completely unpenalized.

The choice of $d$ and $c$ is compatible with the original construction. Indeed,
\begin{equation*}
A_{uc}=C-2I_2=
\begin{bmatrix}
-1 & \xi\\
0 & -\tfrac32
\end{bmatrix},
\end{equation*}
and with
\begin{equation*}
x_{\mathrm{eq}}=
\begin{bmatrix}
-\kappa/w\\
0
\end{bmatrix},
\qquad
W_{\mathrm D}=\diag(w,0),
\end{equation*}
one has
\begin{equation*}
A_{uc}x_{\mathrm{eq}}+d=0,
\qquad
-W_{\mathrm D}x_{\mathrm{eq}}=c.
\end{equation*}

We first analyze the associated homogeneous free-endpoint problem. Let
\begin{equation*}
P=P^\top=
\begin{bmatrix}
P_{11} & P_{12}\\[2pt]
P_{12} & P_{22}
\end{bmatrix}
\end{equation*}
solve the ARE \eqref{eq:ARE-min}. Expanding \eqref{eq:ARE-min} gives
\begin{equation*}
\begin{cases}
-(P_{11}^2+P_{12}^2)=0,\\[2pt]
\xi P_{11}-P_{12}(P_{11}+P_{22}+\eta)=0,\\[2pt]
2\xi P_{12}-2\eta P_{22}-(P_{12}^2+P_{22}^2)-\beta=0.
\end{cases}
\end{equation*}
The first equation implies
\begin{equation*}
P_{11}=0,
\qquad
P_{12}=0.
\end{equation*}
Hence all symmetric ARE solutions are diagonal, and the last equation reduces to
\begin{equation*}
P_{22}^2+2\eta P_{22}+\beta=0.
\end{equation*}
Define
\begin{equation*}
\Delta:=\sqrt{\eta^2-\beta}\in(0,\eta).
\end{equation*}
Then the two symmetric ARE solutions are
\begin{equation*}
P_-=\diag(0,-\eta-\Delta),
\qquad
P_+=\diag(0,-\eta+\Delta).
\end{equation*}

Now
\begin{equation*}
A_-:=\widetilde A-P_-=
\begin{bmatrix}
0 & \xi\\[2pt]
0 & \Delta
\end{bmatrix},
\end{equation*}
so
\begin{equation*}
\ker(P_-)=\mathrm{span}(e_1),
\qquad
A_-e_1=0.
\end{equation*}
Therefore
\begin{equation*}
(\ker P_-\mid A_-)=\mathrm{span}(e_1).
\end{equation*}
Since the eigenvalues of $A_-$ are $0$ and $\Delta>0$, one has
\begin{equation*}
\mathcal X^+(A_-)=\reals^2.
\end{equation*}
Hence, the free subspace is
\begin{equation*}
\mathcal N=(\ker P_-\mid A_-)\cap \mathcal X^+(A_-)=\mathrm{span}(e_1).
\end{equation*}

Because
\begin{equation*}
P_+-P_-=\diag(0,2\Delta),
\end{equation*}
one has
\begin{equation*}
\{x\in\reals^2:\ (P_+-P_-)x\in\mathcal N^\perp\}
=
\mathcal N^\perp=\mathrm{span}(e_2).
\end{equation*}
Hence the corresponding supporting projector is
\begin{equation*}
\Pi_{\mathcal N}=e_1e_1^\top.
\end{equation*}
Therefore the supported free-endpoint solution is
\begin{equation*}
P_\star=P_-\,\Pi_{\mathcal N}+P_+(I-\Pi_{\mathcal N})=P_+.
\end{equation*}
Hence, for the homogeneous problem, the free-endpoint optimal feedback is
\begin{equation*}
v^\star_{\mathrm h}(x)=-P_\star x
=
\begin{bmatrix}
0\\
(\eta-\Delta)x_2
\end{bmatrix}.
\end{equation*}

We now return to the affine problem. Since $P_\star=P_+$, we have
\begin{equation*}
A_{\mathrm{cl}}
=
\widetilde A-P_\star
=
\begin{bmatrix}
0 & \xi\\[2pt]
0 & -\Delta
\end{bmatrix}.
\end{equation*}
In particular,
\begin{equation*}
\ker(A_{\mathrm{cl}})=\mathrm{span}(e_1).
\end{equation*}
Moreover,
\begin{equation*}
g:=P_\star d+c
=
\begin{bmatrix}
0&0\\
0&-\eta+\Delta
\end{bmatrix}
\begin{bmatrix}
-\kappa/w\\
0
\end{bmatrix}
+
\begin{bmatrix}
\kappa\\
0
\end{bmatrix}
=
\begin{bmatrix}
\kappa\\
0
\end{bmatrix}.
\end{equation*}
Therefore
\begin{equation*}
e_1^\top g=\kappa\neq 0.
\end{equation*}
Equivalently,
\begin{equation*}
g\notin \operatorname{im}(A_{\mathrm{cl}}^\top).
\end{equation*}
By Proposition~\ref{prop:affine_consistency}, the bias equation
\begin{equation*}
A_{\mathrm{cl}}^\top p+g=0
\end{equation*}
has no solution, and thus there exists no quadratic-affine stationary identity of the form
\begin{equation*}
V(x)=x^\top P_\star x+2p^\top x
\end{equation*}
associated with a feedback
\begin{equation*}
v^\star(x)=-R^{-1}(P_\star x+p).
\end{equation*}

This example isolates a genuinely affine obstruction in a way that is compatible with the original reduction: the homogeneous free-endpoint problem is algebraically regular and admits a supported optimal feedback, but the affine forcing injects a nonzero component along the free closed-loop direction $e_1$, so the homogeneous controller cannot be extended to a stationary affine controller of the same quadratic-affine form.}
\end{example}

\end{document}